\theoremstyle{definition}
\newtheorem{definition}{Definition} 
\theoremstyle{definition}
\newtheorem{theorem}[definition]{Theorem}
\theoremstyle{remark}
\newtheorem{remark}{Remark}[section]
\newcommand{\A}{\mathfrak{A}}
\newcommand{\nalgebra}{\mathfrak{M}}
\newcommand{\M}{\mathcal{M}}
\newcommand{\eps}{\varepsilon}
\newcommand{\R}{\mathbb{R}}
\newcommand{\N}{\mathbb{N}}
\newcommand{\op}{{\rm op}}
\newcommand{\Z}{\mathbb{Z}}
\newcommand{\<}{\left\langle}
\newcommand{\Tr}{\text{Tr}}
\newcommand{\C}{\mathbb{C}}
\newcommand{\K}{\mathcal{K}}
\newcommand{\UM}{\mathbbm{1}}
\renewcommand{\L}{\mathcal{L}}
\renewcommand{\H}{\mathcal{H}}
\renewcommand{\>}{\right\rangle}
\newcommand{\ip}[2]{\left\langle #1 , #2 \right \rangle}
\newcommand{\co}[1]{co\left(#1\right)}
\newcommand{\cco}[1]{\overline{co}\left(#1\right)}
\newcommand{\calF}{\mathcal{F}}
\newcommand{\bbE}{\mathbb{E}}
\newcommand{\Var}{\mathrm{Var}}
\newcommand{\biglan}{\big\langle}
\newcommand{\bigran}{\big\rangle}
\newcommand{\Biglan}{\Big\langle}
\newcommand{\Bigran}{\Big\rangle}
\newcommand{\PTr}{\mathrm{Ptr}}
\newcommand{\Cstar}{$\mathrm{C}^\ast$}
\begin{document}
\title{Pure and Mixed States}
\author{J. C. A. Barata}
\address[J.\ C.\ A.\ Barata]{Instituto de Física, Universidade de São Paulo,  CP 66.318, 05314-970, São Paulo, SP, Brazil.}
\email{jbarata@if.usp.br}
\author{M.\ Brum}
\thanks{M.\ Brum was supported by grant 2015/02975-4, São Paulo Research Foundation (FAPESP)}
\address[M. Brum]{Departamento de Matemática, Universidade Federal do Rio de Janeiro -- Campus Duque de Caxias. Duque de Caxias, RJ, 25265-970, Brazil.}
\email{marcos.brum@xerem.ufrj.br}
\author{V.\ Chabu}
\thanks{V.\ Chabu was supported by grant 2017/13865-0, São Paulo Research Foundation (FAPESP)}
\address[V.\ Chabu]{Instituto de Física, Universidade de São Paulo, CP 66.318, 05314-970, São Paulo, SP, Brazil.}
\email{vbchabu@if.usp.br}
\author{R.\ Correa da Silva}
\address[R.\ Correa da Silva]{Instituto de Física, Universidade de São Paulo,  CP 66.318, 05314-970, São Paulo, SP, Brazil.}
\email{ricardo.correa.silva@usp.br}

\begin{abstract}
  We present a review on the notion of pure states and mixtures as
  mathematical concepts that apply for both classical and quantum
  physical theories, as well as for any other theory depending on
  statistical description. Here, states will be presented as
  expectation values on suitable algebras of observables, in a manner
  intended for the non-specialist reader; accordingly, basic
  literature on the subject will be provided. Examples will be exposed
  together with a discussion on their meanings and implications. An
  example will be shown where a pure quantum state converges to a
  classical mixture of particles as Planck's constant tends to zero.
\end{abstract}

\maketitle


\section{Introduction}\label{sec:introduction}

In many textbooks on quantum physics there is some obscurity
surrounding the notion of pure state, often giving rise to some
misconceptions. In quantum mechanics, for instance, pure states are
frequently associated to normalised vectors (or, more precisely, to
rays) in adequate Hilbert spaces. This is neither an adequate
definition nor is, strictly speaking, a correct one since, according
to a mathematical construction known as GNS representation, any state
(including mixed ones) in an adequately defined algebra of observables
can be represented by a vector state in some Hilbert space. Moreover,
this pseudo-definition fails to capture the statistical quality of the
notions of pure and of mixed states, which is actually quite simple
and illuminating.

Our objective in this review is to present both the intuitive meaning
of the concept of pure and of mixed states as well as to develop the
mathematical (algebraic) formalism around these notions in order to
clarify some of these issues.

Besides, we will explore this formalism in order to enhance our
understanding about the physical nature hidden beneath these algebraic
and statistical notions. One of the most notable results in this
direction is the possibility of unifying the treatment of quantum and
classical observables as elements in the same kind of algebra, known
as \Cstar-algebra, the only difference being that in the classical
case these elements commute with respect to the algebra's operation
(multiplication or composition), whereas in the quantum case they do
not necessarily commute.

We also present in Section \ref{sec:quantum_classical} a systematic
procedure (known as Weyl quantisation) for transforming classical
observables, namely, functions on a phase space, into quantum ones,
\emph{i.e.}, self-adjoint operators acting on a Hilbert space. Then,
by means of tools provided by Semi-Classical Analysis we will
introduce, we will exhibit an example of a family of \underline{pure}
quantum states that degenerates into classical \underline{mixtures} in
the limit when Planck's constant is taken small.

The physically important relation between purity of states and
irreducibility of certain representations of the algebra of
observables is discussed in Section \ref{sec:irreducibility}.  The
notion of purification of states is presented in Section
\ref{sec:Purification-of-States}.  Sections \ref{sec:normal_states},
\ref{sec:moreissuesaboutpurity} and
\ref{sec:Krein-MilmanandChoquettheorems} are more technical,
mathematically. In Section \ref{sec:normal_states} we introduce the
important notion of normal state and present its relation to the
so-called density matrix, a very important notion in Quantum Mechanics
relevant in discussions concerning certain aspects of Quantum
Information Theory, as the notion of von Neumann entropy and
entanglement entropy. In Section \ref{sec:moreissuesaboutpurity} we
present some interesting remarks about the notion of purity and in
Section \ref{sec:Krein-MilmanandChoquettheorems} we present results,
through Krein-Milman and Choquet's Theorems, on the existence of pure
states and on the decomposition of general states into pure ones.


\subsection{Pure and mixed probability
  distributions}\label{sec:ProbabilityDistributions}

We start our presentation considering the simpler and perhaps more
familiar context of probability distributions, where the notions of
purity and of mixture can be discussed in a quite elementary way.

Let us consider a probability space, which consists of a set $\Omega$,
called {\em event space}, and a family $\calF$ of subsets of $\Omega$,
called {\em events}. For technical reasons $\calF$, has to be a
$\sigma$-algebra of sets, but this point will not be relevant on what
follows, except to point out that $\Omega$ itself and the empty set
$\emptyset$ are possible events, that means, are elements of $\calF$.

A probability measure $\mu$ in $(\Omega,\; \calF)$ is an assignment of
each event $A\in\calF$ to a real number in $[0, \; 1]$ such that the
following conditions are fulfilled: $\mu(\emptyset)=0$, $\mu(\Omega)=1$
and $\mu\left(\bigcup_{n\in\N}A_n\right)=\sum_{n=1}^\infty \mu(A_n)$
for any collection $\{A_n\in\calF, \; n\in\N\}$ of disjoint subsets.

These postulates, widely known in the literature as {\em Kolmogorov
  axioms}, capture the essential ingredients of the intuitive notion
of probability and many basic properties of probability measures can
be directly derived from them. For instance, one of the easy
consequences of the above postulates is that $\mu(A)\leq \mu(B)$ for
any $A$ and $B$ in $\calF$ such that $A\subset B$.

As a simple example, let $\Omega = \mathbb{R}$ and let $\mu$ assign,
to any measurable subset $A \subset \mathbb{R}$ (for instance, an
open interval), the number
\[
\mu(A) = \frac{1}{\sqrt{2\pi}}\int_A \mathrm{e}^{-x^2/2} \mathrm{d}x \; .
\]
$\mu(A)$ is the probability of occurrence of event $A$ for the
particular Gaussian distribution considered in the integral.

A probability measure $\mu$ is said to be a {\em mixture} if there are
two other distinct probability measures $\mu_1$ and $\mu_2$, on the
same probability space, and numbers $\lambda_1, \; \lambda_2\in (0, \; 1)$
with $\lambda_1+\lambda_2=1$ such that
\begin{equation}\label{eq:combinacaolinearconvexa}
  \mu(A) \; = \; \lambda_1 \mu_1(A) + \lambda_2\mu_2(A)
\end{equation}
holds for all events $A\in\calF$. A probability measure is said to be
{\em pure}, or extremal, if it is not a mixture.
In the Bayesian parlance, the probabilities $\mu_1$ and $\mu_2$ are
priors of $\mu$ and $\lambda_1$ and $\lambda_2$ are their respective
likelihoods. 

An expression like \eqref{eq:combinacaolinearconvexa} is called a
convex linear combination of $\mu_1$ and $\mu_2$.  Notice that $\mu_1$
or $\mu_2$ may be mixtures themselves and, hence, we can say that a
probability measure is a mixture if it can be written as a finite (or
even infinite) convex sum of distinct probability measures:
$\mu(A)=\sum_{k=1}^n\lambda_k\mu_k(A)$, for some $n\in\N$, with
$\sum_{k=1}^n\lambda_k=1$ and $\lambda_k\in(0, \; 1)$ for all
$k$.

In order to explain the intuitive nature of a mixed probability
distribution, let us consider a very simple situation where mixture
occurs. Suppose we order a large amount of balls from two different
factories, each factory having its own standard fabrication processes.
The balls produced in each factory are not perfectly the same and will
differ randomly from each other. If we consider one specific
parameter for characterising the balls, say, their diameter, we can
associate to each factory a probability distribution associated to the
diameter: for $0<d_1<d_2$, the quantity $\mu_k\big( (d_1, \; d_2) \big)$
measures the probability for a ball produced in factory $k=1, \; 2$,
to have a diameter in the interval $(d_1, \; d_2)$.

Now, consider that we mix the balls produced in both factories, so
that a fraction $\lambda_k \in (0, \; 1)$ comes from the production of
factory $k=1, \; 2$. Naturally $\lambda_1+\lambda_2=1$. If we
measure the diameters of the balls in this mixed ensemble, it is
intuitively clear that measurements of the diameters of the balls will
be described by a probability measure $\mu$ given by
$\mu\big( (d_1, \; d_2) \big) = \lambda_1 \mu_1\big( (d_1, \; d_2)
\big)+ \lambda_2 \mu_2\big( (d_1, \; d_2) \big)$, again with $0<d_1<d_2$.

The probability $\mu$ is therefore a mixture of the probabilities
$\mu_1$ and $\mu_2$ with fractions $\lambda_1$ and $\lambda_2$,
respectively, since the ensemble considered is a mixture (in the
common sense of the word) of two ensembles described by the two
probabilities $\mu_1$ and $\mu_2$. 

Notice that the probabilities $\mu_1$ and $\mu_2$ can be themselves
mixtures, as it can happen if, for instance, the balls are produced
by different machines in each of the factories.

This example illustrates the intuitive idea behind the notion of a
mixed probability distribution: it describes samples composed by
objects of different origins which are placed together.  In contrast,
pure probability distributions describe systems that, in a sense, are
not decomposable in simpler ones. As we will see, in the case of
quantum systems these notions are neatly reproduced in the algebraic
formalism.


\subsection{Mean values and variances}\label{sec:MeanValues}

Given a probability distribution on a probability space, there is a
series of statistical quantities that provide information on the
distribution. They can also provide some insight on the nature of pure
and mixed probability distributions.

Let $f: \Omega\to\R$ be a real function defined on the event space
representing some observable quantity.  (Technically, $f$ has to be a
measurable function, but we will not stress such mathematical points
by now). We define its {\em expectation}, {\em average} or {\em mean
  value}, according to the probability measure $\mu$ by
$$
\bbE_\mu(f) \; \equiv \; \langle f\rangle_\mu \; := \;
\int_\Omega f\, d\mu
\;.
$$
The {\em variance} of $f$ on $\mu$ is defined by
$$
\Var_\mu (f)
\; := \;
\Big\langle \big( f - \langle f\rangle_\mu \big)^2 \Big\rangle_\mu
\; = \;
\langle f^2\rangle_\mu - \langle f\rangle_\mu^2 
\;.
$$
As we see from the definition, $\Var_\mu (f)$ measures how much $f$
typically deviates from its mean value $\langle f\rangle_\mu$.  It is
clear from the definition that $\mathrm{Var}_\mu (f)\geq 0$.  The
quantity $\sigma_\mu(f):=\sqrt{\mathrm{Var}_\mu (f)}$ is called the {\em
  standard deviation} of $f$ on $\mu$.

Consider a mixed probability measure
$\mu=\lambda_1\mu_1 + \lambda_2\mu_2$, with $\mu_1$ and $\mu_2$ being
two distinct probability measures in some probability space and
$\lambda_1, \; \lambda_2\in (0, \; 1)$ with
$\lambda_1+\lambda_2=1$. Then one can easily see that
$$
\langle f\rangle_\mu \; = \; \lambda_1 \langle f\rangle_{\mu_1} + \lambda_2 \langle f\rangle_{\mu_2}
\;.
$$
Moreover, one can also easily verify that
$$
\Var_\mu (f) \; = \;
\lambda_1 \Var_{\mu_1} (f) 
+ \lambda_2\Var_{\mu_2} (f)
+ \lambda_1\lambda_2
\Big[\langle f \rangle_{\mu_1} - \langle f \rangle_{\mu_2}\Big]^2
\; .
$$
From this we conclude that
$$
\Var_\mu (f)
\; \geq \;
\lambda_1 \Var_{\mu_1} (f) 
+ \lambda_2\Var_{\mu_2} (f) 
\; \geq \;
\min\big\{\Var_{\mu_1} (f)\;, \Var_{\mu_2} (f) \big\}
  \; .
$$
Hence, for the mixed probability measure $\mu$ the variance
$\Var_\mu (f)$ is always larger than or equal to the smallest of the
numbers $\Var_{\mu_1} (f)$ or $\Var_{\mu_2} (f)$. Therefore, for a
fixed function $f$, the smallest values of $\Var_\mu (f)$ will be
obtained for pure measures on this probability space. In this sense,
pure probability measures are those for which the deviation of $f$
from its mean value is smallest.


\section{The Notion of State}\label{sec:the_notion}

In Physics, the word ``state'' is often used in a somewhat informal
sense as a set $S$ of intrinsic characteristics of a system maximally
specifying the possible outcomes of measurements of observable
quantities.  A given physical theory specifies which quantities are
observable ({\em i.e.}, measurable through experiments) and a state
can be defined, with a little more precision, as a rule associating
each observable $A$ and each set $S$ of a system's physical
characteristics to a probability measure $\mu_{S, \, A}$ describing
the statistical distribution of repeated measurement of $A$ on an
ideally infinite ensemble of physical systems with the same set of
characteristics $S$. Although this definition is still vague, it is
the base for the precise definition of state that we will present
below, which is algebraic in its nature.

In Classical Mechanics, for instance, observables are (measurable)
functions $A(q, \; p)$ defined in phase space and the state of a
system is specified by a probability distribution $\rho(q, \; p)$
defined in phase space so that the mean values of repeated
measurements of $A$ in the state $\rho$ are given by
$\langle A \rangle_\rho= \int A(q, \; p)\rho(q, \; p)dqdp$.

A relevant case consists of states given by the probability
distribution $\rho_0(q, \; p)=\delta(q-q_0)\delta(p-p_0)$, where
$\delta$ represents the Dirac measure and where $(q_0, \; p_0)$ is a
given point in phase space. In this case we have
$\langle A \rangle_{\rho_0} = A(q_0, \; p_0)$. Moreover, as one easily
checks, $\Var_{\rho_0}(A)=0$, leading to the interpretation that all
individual measurements of $A$ in the state $\rho_0$ will result in
the same value $ A(q_0, \; p_0)$. Hence, the state $\rho_0$ represents a
deterministic state, fully characterised by $q_0$ and $p_0$, where
measurements of observable quantities always lead to the same result.

On the other hand, in Quantum Mechanics it is commonly thought that all the possible
pure ``states'' of a physical system
are described by normalised vectors of a Hilbert space, and the
possible measurable observables by self-adjoint operators acting on
them. Vectors in a Hilbert space $\H$ may indeed represent pure
states, (either in the intuitive notion explained above or in the
formal one to be presented below). However, as we shall discuss, not
every state can be represented as a vector in $\H$ and vector
states are nor necessarily pure.

For instance, let us consider the Hilbert space representing a
two-level system (a {\em qubit}) described in the Hilbert space
$\H = \C^2$. If we have several copies of this system in the same
state $\psi\in \H$ (with
$\|\psi\|^2=\<\psi,\,\psi\>=1$)\footnote{Scalar products (or inner
  products) in Hilbert spaces will be always denoted here by
  $\langle \phi, \; \psi\rangle$ rather than by
  $\langle \phi\,|\, \psi\rangle$. We follow the physicists'
  convention: they are antilinear in the first argument and linear in
  the second.} and measure each one of them for an observable $A$, the
mean value $\< A\>_\psi$ of the measured results is given by the inner
product $\< A \>_\psi = \<\psi,\, A\psi\>$. However, if the copies are
composed by a few systems in a state $\phi_1\in \H$ and a few other in
a different state $\phi_2\in \H$ (let us suppose a fraction $p_1$ of
the total number of particles in $\phi_1$, and $p_2$ in $\phi_2$, so
$p_1 + p_2 = 1$), then the mean value of the several measurements is
expected to be
$\< A \> = p_1 \<\phi_1, \, A\phi_1\> + p_2 \<\phi_2, \,
A\phi_2\>$. Calculating the average of the measures taken for
different copies of a system is precisely what is meant by
\emph{average value} of an observable for a system in a defined state,
so there \emph{must} be a state in which the system can be that
corresponds to this mean value
$p_1 \<\phi_1,\, A\phi_1\> + p_2 \<\phi_2,\, A\phi_2\>$.

Is there any vector state that could represent such a mixture of
states, \emph{i.e.}, a vector $\psi \in \C^2$ such that
$\<\psi,\, A\psi\> = p_1 \<\phi_1,\, A\phi_1\> + p_2 \<\phi_2,\, A\phi_2\>$?
The answer is \emph{no}, unless it is a trivial mixture where either
$p_1$ or $p_2$ is $0$, or $\phi_1 = \phi_2$ (this will follow from
Theorem \ref{th:vector_is_pure} in Section
\ref{sec:pure_and_vector}). An example of such impossibility is shown
in Section \ref{sec:example_mixture}, leading us to the conclusion
that a more comprehensive way of representing states is needed in
order to fully describe a quantum system.

Mixtures as those commented above are usually introduced in a quantum
theory based on (separable) Hilbert spaces $\H$ as \emph{density
  operators} $\rho$, which are positive trace class operators
normalised so as $\Tr(\rho) = 1$.
For a finite (possibly infinite)
mixture of states $\phi_1,\, \ldots,\, \phi_n$ with weights $p_1,\, \ldots ,\, p_n$ (and
$\sum_{k = 1}^n p_k = 1$), it is constructed as
\begin{equation}\label{eq:density_operator}
\rho = \sum_{k = 1}^n p_k \left| \phi_k \>\< \phi_k \right|
\end{equation}  
(in the infinite case, the sum's convergence is uniform) and it is
easy to see that the average value may be calculated by means of the
formula $\<A\>_\rho = \Tr(\rho A)$, since
\begin{equation*}
\Tr(\rho A) = \sum_{k = 1}^n p_k \<\phi_k,\; A\phi_k\>.
\end{equation*}

We shall denote by $\L(\H)$ the set of bounded (\emph{i.e.},
continuous) linear operators acting on a Hilbert space $\H$.

We remark that any operator such as that in equation
\eqref{eq:density_operator} has the properties listed above for
density operators. Conversely, if some $\rho \in \L(\H)$ is positive
and $\Tr(\rho)<\infty$, then it is a compact operator (see {\em e.g.}
\cite{reed_simon}) and, therefore, possesses discrete and finitely
degenerate spectrum and a spectral decomposition
$\rho = \sum_{k = 1}^\infty \varrho_k \left| \phi_k \>\< \phi_k
\right|$, where the $\phi_k$ are normalised and mutually orthogonal,
and $\varrho_k > 0$ with $\sum_{k = 1}^\infty \varrho_k = 1$, allowing
us to interpret $\rho$ as the density operator of an infinite mixture of
states $\phi_k$ with statistical weights $\varrho_k$.

\subsection{The physically motivated topology on states}\label{sec:physical_topology}

Any deeper analysis of Quantum Physics requires the introduction of
topologies in the set of observables or in the set of states, that
means, the introduction of the notion of {\em closeness} between
different observables or between different states. This is
particularly relevant if we intend to use the notion of convergence of
observables or of states. The difficulty is that the space of
observables (typically a \Cstar-algebra) and the associated set of
states are usually infinite dimensional and, therefore, are usually
equipped with many non-equivalent topologies, {\em i.e.}, with many
non-equivalent notions of closeness between their elements and,
therefore, equipped with many non-equivalent notions of convergence.

For the space of states it is possible to point to a physically
motivated topology, that we now describe. Other useful topologies will
be mentioned later (as the norm topology in the set of states in
\Cstar-algebras. See Definition \ref{def:state}).

Let us take an observable $A$; for the sake of simplicity, imagine
that it possesses a finite set of possible outcome values, and let
$\lambda$ be one of them. Then, its relative frequency
$\nu_A(\lambda,n)$ when $A$ is measured on $n$ copies of a system in
the state $\omega$ should converge to a probability $p_A(\lambda)$ in
the limit $n \rightarrow \infty$, and so must converge the
measurements' mean value $\sum_k \lambda_k \nu(\lambda_k,n)$ to the
expectation value $\omega(A)$. As we saw in the beginning of this
section, saying that a system is in the state $\omega$ means precisely
that whenever we take measurements on its copies for an observable
$A$, we will obtain an outcome according to the probability
distribution $\lambda \longmapsto p_A(\lambda)=\omega(E^A_\lambda)$,
where $E^A_\lambda$ is the spectral projection of $A$ on the closed
subspace of eigenvectors with eigenvalue $\lambda$. This is, in
theory, how to connect the positive linear functional $\omega$ to the
intrinsic property of a physical system that we called \emph{state};
in practice, this connection is trickier.

It happens that it is not possible to take an infinite number of
measurements resulting in a value $\lambda$, nor can we obtain all
possible outcomes for an observable in the case where it has an
infinite set of possible results, so we never know a system's exact
probability distributions $\lambda \longmapsto p_A(\lambda)$. Worse,
we may not measure a system for every possible observable $A$. As a
consequence, we are bent not to know a system's state exactly, for the
most we may hope to acquire from data are some approximative mean
values for small sets of finite observables.

In order to treat this problem, consider first a particular bounded
observable $A$. The first and the second issues are set by taking a
number $n$ of measurements sufficiently large so as to have:
$$\left| v(A) - \omega(A) \right| < \eps,$$
where $v(A) = \sum_{k}\lambda_k \nu_A(\lambda_k,n)$ is the measured
mean value of $A$, and $\eps > 0$ is an error that we may take
arbitrarily small, by hypothesis. Unfortunately, it is possible that a
sufficiently large $n$ that works for all observables does not
exist. Yet, we may have arbitrarily precise information on $\omega$
for finite sets of observables: taking observables
$A_1,\,\ldots ,\, A_k$ and an error $\eps$, we define a
\emph{neighbourhood} of $\omega$ with radius $\eps$ in the set of all
states $\Sigma$ as
$$
\Omega_\eps(A_1,\, \ldots ,\, A_k;\omega)
\;=\; \Big\{
\alpha \in \Sigma \; : \; \left| \alpha(A_j) - \omega(A_j) \right| < \eps,
\; \forall j \in \{1,\, \ldots \, ,k\}
\Big\}.
$$
Surely, $\omega \in \Omega_\eps(A_1,\,\ldots,\, A_k; \omega)$. Also, this
neighbourhood contains any state $\tilde{\omega}$ satisfying
$\tilde{\omega}(A) = v(A)$ at least for $A = A_1,\, \ldots,\, A_k$, so under
any measurement of observables $A_1,\, \ldots ,\, A_k$, $\tilde{\omega}$
describes our actual physical system as well as $\omega$. In this
sense, the states in $\Omega_\eps(A_1,\, \ldots ,\, A_k;\omega)$ are good
approximations for $\omega$.

Besides, it is a mathematical fact that this kind of neighbourhoods
induce in $\Sigma$ a topology, called in \cite{araki} \emph{physical}
or \emph{weak topology}, whose notion of convergence may be cast in
the following way: one says that a sequence\footnote{Strictly speaking
  one should use {\em nets} to characterise convergence in those
  topologies, instead of sequences. For the sake of clarity we will
  use the latter. See, {\em e.g.}, \cite{megginson}.} of states $\omega_n$
converges \emph{physically}, or \emph{weakly}, to $\omega$ if, for any
observable $A$, the mean values converge, \emph{i.e.}, if one has
$\omega_n(A) \longrightarrow \omega(A)$ as $n \rightarrow \infty$. As
a conclusion, $\omega$ may be completely determined by means of a
process of taking limits in the weak sense.

It is important to clarify some differences in
nomenclature: among mathematicians the topology referred to as 
\emph{weak topology} is called \emph{weak-$\ast$ topology}. The reason
relies on the fact that sets defined like
$\Omega_\eps(A_1,\, \ldots ,\, A_k;\omega)$, with $A_1,\ldots, A_k$ in a normed
vector space $\mathcal{V}$, and $\omega$ any bounded linear functional
on $\mathcal{V}$, form a basis of neighbourhoods for a locally convex
topology on the dual of $\mathcal{V}$.

Topologies can be similarly introduced in the space of observables,
providing notions of closeness between operators. We will make use of
some of then in our final sections.

In the so-called {\em norm (or uniform) topology}, two operators $A$
and $A'$ acting on a Hilbert space $\H$ are considered close if for
some prescribed $\eps>0$ one has
$\|A-A'\|:=\sup\big\|(A-A')\psi\big\|<\eps$, where the supremum is
taken over all vectors $\psi\in\H$ with $\|\psi\|=1$. This means that
$A\psi$ and $A'\psi$ differ in norm by an amount smaller than the
prescribed error $\eps$ regardless of the normalised vectors $\psi$.  In
this topology, we say that a sequence of operators $A'_n$ converges to
$A$ if for any $\|A-A'_n\|$ goes to zero when $n$ goes to infinity.

In the so-called {\em strong operator topology} in a Hilbert space
$\H$, two operators $A$ and $A'$ are considered close with respect to
a distinct finite set of normalised vectors
$\psi_j\in\H, \; j=1, \; \ldots , \; N$, and for some prescribed
$\eps >0$, if one has $\big\|(A-A')\psi_j\big\|<\eps$, for all
$j=1, \; \ldots , \; N$. This means that the vectors $A\psi_j$ and
$A'\psi_j$ differ in norm, for each $j=1, \; \ldots , \; N$, by an
amount smaller than the prescribed error $\eps$. In this topology, we
say that a sequence of operators $A'_n$ converges to $A$ if they
eventually become close, when $n\to\infty$, with respect to all finite
sets of normalised vectors $\psi_j\in\H, \; j=1, \; \ldots , \; N$,
and all $\eps >0$.

In the so-called {\em weak operator topology} in a Hilbert space $\H$,
two operators $A$ and $A'$ are considered close with respect to a
distinct set of normalised vectors
$\psi_j\in\H, \; j=1, \; \ldots , \; N$, and for some prescribed
$\eps >0$, if one has
$\big|\langle \psi_j, \; (A-A')\psi_j\rangle\big|<\eps$, for all
$j=1, \; \ldots , \; N$. This means that $A$ and $A'$ provide the same
expectation values for the vector states defined by the vectors
$\psi_j$, $j=1, \; \ldots , \; N$, up to an error smaller than the
prescribed $\eps$. In this topology, we say that a sequence of
operators $A'_n$ converges to $A$ if they eventually become close when
$n\to\infty$ with respect to all finite sets of normalised vectors
$\psi_j\in\H, \; j=1, \; \ldots , \; N$, and all $\eps >0$.

In the weak operator topology the notion of closeness between
operators is expressed in terms of their expectation values and,
therefore, is directly linked to measurable quantities.

In infinite dimensional Hilbert spaces all operator topologies defined
above differ \cite{bratteli}, leading to distinct notions of
convergence between operators, a very significant fact for the
mathematical analysis of quantum systems. For instance, convergence of
sequences of operators in the uniform operator topology implies
convergence in the strong operator topology.  Analogously,
convergence of sequences of operators in the strong operator topology
implies convergence in the weak operator topology. In both cases the
opposite statements are not generally valid.

\section{The Algebraic Approach to Quantum Systems}

The familiar Hilbert space approach, however, has limitations when
dealing with quantum systems with infinitely many degrees of freedom,
as those considered in Quantum Field Theory and Quantum Statistical
Mechanics, mainly due to important features commonly manifest in such
systems, such as superselection sectors, phase transitions and the
existence of some special states, for instance ``thermal or finite
temperature'' states, that cannot be properly described in the Hilbert
space formalism. A universal formalism that can be applied to general
quantum systems was proposed by Haag, Kastler and many others (see
{\em e.g.}, \cite{araki,haag} or, for a more recent review,
\cite{FewsterRejzner}).  We will refer to this formalism as the
algebraic approach to quantum systems. In general terms, it emphasises
the dichotomy between observables (representing physically meaningful
and measurable quantities) and states (dealing with the statistics of
measurements of physically observable quantities).

In this formalism, observables are treated as abstract associative
algebras of a certain kind (usually $C^\ast$ and/or von Neumann
algebras are considered), while states are associated to positive
normalised linear functionals on these algebras.  Within the algebraic
approach one is no longer restricted to the use of the vectors or
density operator on $\H$ in order to describe states.  Moreover, this
formalism allows the treatment of pure and mixed states in a very
general and elegant fashion, a point that will be relevant for our
purposes.

As we mentioned, it was due to the efforts of Haag and others that
\Cstar-algebras have been recognised as the relevant mathematical
objects for the universal description of observables in quantum
systems. Let us briefly describe such algebras.

Let us denote by $\mathcal{O}$ the set of observables of a physical
system. 
$\mathcal{O}$  must have a real vector space structure; moreover,
the composition of some observables must result in a new observable. 
This suggests that $\mathcal{O}$ must be contained in a
larger set that is an associative algebra.  Let us denote a minimal
associative algebra satisfying these properties by $\A$. Hence, $\A$
is a (complex) vector space endowed with associative multiplication.
We also require the existence of an
operation $\ast : \A \longrightarrow \A$, called an \emph{involution}
in $\A$, such that for all $A, \; B\in\A$ and $z\in\C$ one has:
$(A^\ast)^\ast=A$, $(A+B)^\ast = A^\ast + B^\ast$,
$(zA)^\ast=\overline{z}A^\ast$ and\footnote{These rules are not
  supposed to hold in the case of unbounded operators acting on Hilbert
  spaces. See {\em e.g.}. \cite{reed_simon}.}
$(AB)^\ast = B^\ast A^\ast$. Here $\overline{z}$ denotes the complex
conjugate of $z\in\C$.

The algebra $\mathcal{L}(\H)$, for example, possesses these properties
and the involution is related the notion of the \emph{adjoint} of a
bounded operator acting on $\H$ with respect to the scalar product on
$\H$.

The requirement of associativity in quantum systems deserves some
physical clarification. Regarding the elements of $\A$ as {\em
  operations} acting on a quantum system, the order of two successive
operations matters and the algebra of observables is not supposed to
be commutative. If we consider three successive operations, however,
one has to guarantee that the last operation does not depend on the
previous ones, what is achieved through the requirement of
associativity.

Finally, since $\A$ is an extension of the more physically relevant
set $\mathcal{O}$, we need a way to distinguish $\mathcal{O}$ within
$\A$. The involution provides a method to identify the elements of
$\mathcal{O}$ among all elements of $\A$: let $A \in \A$, if
$A \in \mathcal{O}$, then $A=A^\ast$.  One can wonder if $A^\ast=A$
implies $A\in\mathcal{O}$. In the first attempts to axiomatise quantum
mechanics it was required that any self-adjoint operator should
represent an observable, but for physical reasons this requirement was
discarded. There are many examples of self-adjoint operators not
associated to observables. If $U$ is any unitary operator acting on a
Hilbert space of physical states $\H$, then $U+U^*$ is self-adjoint,
but it may not be associated to a measurable quantity.  For example,
when $U$ represents the shift operator on a separable Hilbert space,
acting on an orthonormal base of vectors $\{\phi_n\}_{n\in\Z}$ as
$U\phi_n=\phi_{n+1}$). Another example: consider a Fermionic field
$\psi$ and take $\psi+\psi^*$, a self-adjoint operator not related to
a measurable quantity. For a more detailed discussion on the axioms
of quantum mechanics see \cite{Emch}.

The last ingredient in our construction is a norm on $\A$ which must
be compatible with the multiplication and the involution
operations. We now define:

\begin{definition}
  A \Cstar-algebra is a set $\A$ provided with a complex linear
  structure, an associative multiplication, an involution and a norm
  such that
	\begin{enumerate}[(i)]
        \item $\lambda(AB)=(\lambda A)B=A(\lambda B)$, for all
          $A,B\in\A$ and $\lambda \in \mathbb{C}$;
        \item $A(B+C)=AB+AC$, for all $A,B, C\in\A$;
        \item $\|AB\|\leq \|A\| \|B\|$, for all $A,B\in\A$;
        \item $\|A^\ast A\|=\|A\|^2$, for all $A\in\A$;
        \item $\A$, as a vector space, is complete with the norm.
	\end{enumerate}
\end{definition}

The algebra $\L(\H)$, of all bounded (continuous) operators acting on a
Hilbert space $\H$, is known to be a \Cstar-algebra.  One might ask
whether the definition above leads to anything different from the
usual description of Quantum Mechanics based in Hilbert spaces. The
answer to this question is negative and is based on the following
facts (respectively, Theorems 2.1.10 and 2.1.11A of \cite{bratteli}):

\begin{theorem}\label{th:algebra_isomorphism}
  Let $\A$ be a \Cstar-algebra. There exists a Hilbert space $\H$
  such that $\A$ is isomorphic to some self-adjoint closed subalgebra
  of $\L(\H)$.
\end{theorem}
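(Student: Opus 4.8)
The plan is to produce the desired Hilbert space by the Gelfand–Naimark–Segal (GNS) construction and then take a direct sum over a sufficiently large family of states. First I would recall that on any \Cstar-algebra $\A$ there is an abundant supply of states: for each self-adjoint $A\in\A$ one can use a Hahn–Banach argument together with positivity of the functional on $\C\UM+\C A$ to produce a state $\omega$ with $\omega(A^\ast A)=\|A\|^2$; the key analytic input here is the spectral-radius/norm identity $\|A^\ast A\|=\|A\|^2$ (axiom (iv)) which guarantees $\|A^2\|=\|A\|^2$ for self-adjoint $A$ and hence $\|A^n\|=\|A\|^n$ along the powers of two. Next, for a single state $\omega$ I would carry out the GNS construction: equip $\A$ with the sesquilinear form $\langle A,B\rangle_\omega:=\omega(A^\ast B)$, which is positive semidefinite by positivity of $\omega$; let $N_\omega=\{A:\omega(A^\ast A)=0\}$, which is a left ideal by the Cauchy–Schwarz inequality for $\langle\cdot,\cdot\rangle_\omega$; form the quotient $\A/N_\omega$, complete it to a Hilbert space $\H_\omega$, and define $\pi_\omega(A)$ by left multiplication, $\pi_\omega(A)[B]=[AB]$. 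One checks $\pi_\omega$ is a $\ast$-homomorphism, and the estimate $\|\pi_\omega(A)[B]\|^2=\omega(B^\ast A^\ast A B)\le\|A\|^2\,\omega(B^\ast B)$ — valid because $\|A\|^2\UM-A^\ast A$ is a positive element and positive elements are mapped to positive functionals — shows $\|\pi_\omega(A)\|\le\|A\|$, so $\pi_\omega(A)$ extends to a bounded operator on $\H_\omega$.

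Then I would assemble the universal representation: let $\{\omega_i\}_{i\in I}$ be the family of \emph{all} states on $\A$, set $\H:=\bigoplus_{i\in I}\H_{\omega_i}$, and $\pi:=\bigoplus_i\pi_{\omega_i}$, which is again a $\ast$-homomorphism into $\L(\H)$ with $\|\pi(A)\|\le\|A\|$. The crucial point is that $\pi$ is isometric: given $A$, pick (by the first step, applied to the self-adjoint element $A^\ast A$) a state $\omega$ with $\omega(A^\ast A)=\|A^\ast A\|=\|A\|^2$; then $\|\pi_\omega(A)[\UM]\|^2=\omega(A^\ast A)=\|A\|^2$ while $\|[\UM]\|^2=\omega(\UM)=1$, so $\|\pi_\omega(A)\|\ge\|A\|$, hence $\|\pi(A)\|=\|A\|$. (If $\A$ is non-unital one adjoins a unit first, or uses an approximate unit; the paper's algebras are unital, so I would simply assume $\UM\in\A$.) An isometric $\ast$-homomorphism is automatically injective, so $\A\cong\pi(\A)$ as $\ast$-algebras, and the isomorphism is isometric.

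Finally I would verify that $\pi(\A)$ is a closed self-adjoint subalgebra of $\L(\H)$. Self-adjointness is immediate since $\pi(A^\ast)=\pi(A)^\ast$. Closedness follows because $\pi$ is isometric and $\A$ is complete by axiom (v): a Cauchy sequence in $\pi(\A)$ pulls back to a Cauchy sequence in $\A$, whose limit maps to the limit in $\L(\H)$. The main obstacle — and the step deserving the most care — is the existence of a state attaining $\omega(A^\ast A)=\|A\|^2$; this is where the \Cstar-identity (iv) is indispensable, since it is what forces the norm on $\A$ to be recoverable from the (purely algebraic) order structure via $\|A\|=\sup_\omega\omega(A^\ast A)^{1/2}$, and without it the representation could fail to be isometric.
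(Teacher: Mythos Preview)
The paper does not actually give a proof of this theorem: it simply cites it as Theorem 2.1.10 of \cite{bratteli} and moves on. Your plan is the standard Gelfand--Naimark argument (GNS construction for each state, then direct sum over all states to get the universal representation, with the \Cstar-identity supplying the isometry), and it is correct as outlined; indeed the paper itself carries out the GNS construction explicitly in Section~\ref{sec:irreducibility}, so your main ingredient is already present in the text. One small caveat: the paper does \emph{not} assume its \Cstar-algebras are unital (it repeatedly discusses approximate identities and adjoining a unit), so your parenthetical ``the paper's algebras are unital'' is inaccurate --- but you already note the non-unital case is handled by adjoining a unit or using an approximate identity, which is exactly right.
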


So, essentially, a \Cstar-algebra is an abstract algebra of
operators. Furthermore:

\begin{theorem}\label{th:algebra_isomorphism_function}
  If a \Cstar-algebra $\A$ is commutative, then there exists a locally
  compact Hausdorff topological space $X$ such that $\A$ is isomorphic
  to $C_0(X)$, \emph{i.e.}, the algebra of continuous complex
  functions on $X$ that vanish at infinity\footnote{A function
    $f \in C_0(X)$ vanishes at infinite if, for any $\eps > 0$, there
    is $K \subset X$ compact such that $|f(x)| < \eps$ for any
    $x \in X \setminus K$.}.
\end{theorem}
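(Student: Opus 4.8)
The plan is to take for $X$ the \emph{Gelfand spectrum} of $\A$, i.e.\ the set of all nonzero multiplicative linear functionals $\varphi\colon\A\to\C$ (the \emph{characters} of $\A$), endowed with the weak-$\ast$ topology it inherits from the dual space $\A^\ast$; this is the classical Gelfand--Naimark construction. First I would recall two standard facts from the theory of commutative Banach algebras: every character is automatically continuous with $\|\varphi\|\le 1$, and the set $X\cup\{0\}$ consisting of the characters together with the zero functional is weak-$\ast$ closed in the closed unit ball of $\A^\ast$, hence weak-$\ast$ compact by the Banach--Alaoglu theorem. Therefore $X$, being the complement of the single point $\{0\}$ inside a compact Hausdorff space, is locally compact and Hausdorff. (If $\A$ is unital, $X$ is itself compact and $C_0(X)=C(X)$; the construction below specialises accordingly.) I would then introduce the \emph{Gelfand transform} $\Gamma\colon\A\to C_0(X)$, $\Gamma(A)=\widehat{A}$, where $\widehat{A}(\varphi):=\varphi(A)$; continuity of $\widehat A$ is the definition of the weak-$\ast$ topology, and vanishing at infinity follows from the compactification $X\cup\{0\}$ just described.

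The next step is to check that $\Gamma$ is a $\ast$-homomorphism. Additivity and multiplicativity are immediate, since each $\varphi$ is by definition linear and multiplicative. The only delicate point is the behaviour under the involution: I would use the genuinely \Cstar-theoretic fact that a self-adjoint element $A=A^\ast$ has real spectrum, together with the Gelfand-theoretic identity $\operatorname{spec}(A)=\{\varphi(A):\varphi\in X\}$ (proved first when $\A$ is unital, then extended by adjoining a unit), to conclude $\varphi(A)\in\R$ for self-adjoint $A$. Writing a general element as $A=B+iC$ with $B=\tfrac12(A+A^\ast)$ and $C=\tfrac1{2i}(A-A^\ast)$ both self-adjoint then gives $\varphi(A^\ast)=\overline{\varphi(A)}$, i.e.\ $\widehat{A^\ast}=\overline{\widehat A}$.

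The core of the theorem is that $\Gamma$ is isometric. Here I would invoke the spectral radius formula $r(A)=\lim_{n}\|A^n\|^{1/n}$ together with the identity $r(A)=\sup_{\varphi\in X}|\varphi(A)|=\|\widehat A\|_\infty$. For self-adjoint $A$ the \Cstar-identity $\|A^2\|=\|A^\ast A\|=\|A\|^2$ iterates to $\|A^{2^k}\|=\|A\|^{2^k}$, so $r(A)=\|A\|$ and hence $\|\widehat A\|_\infty=\|A\|$ in this case. For arbitrary $A$ one then computes $\|A\|^2=\|A^\ast A\|=r(A^\ast A)=\|\widehat{A^\ast A}\|_\infty=\bigl\||\widehat A|^2\bigr\|_\infty=\|\widehat A\|_\infty^2$, so $\Gamma$ is isometric on all of $\A$; in particular $\Gamma$ is injective and its range is closed in $C_0(X)$.

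It remains to prove surjectivity, for which I would appeal to the Stone--Weierstrass theorem in its form for subalgebras of $C_0(X)$ with $X$ locally compact Hausdorff. The range $\Gamma(\A)$ is a subalgebra of $C_0(X)$; it is closed because $\Gamma$ is isometric and $\A$ is complete; it is closed under complex conjugation because $\Gamma$ is a $\ast$-homomorphism; it separates points, since two distinct characters differ on some element of $\A$; and it vanishes at no point, since any nonzero character is nonzero on some element. Stone--Weierstrass then forces $\Gamma(\A)=C_0(X)$, and $\Gamma$ is the desired isometric $\ast$-isomorphism onto $C_0(X)$. The main obstacle, conceptually, is isolating the two places where the full \Cstar-axioms (not merely the Banach-algebra structure) are used --- the reality of the spectrum of self-adjoint elements, and through it the equality $\|A\|=r(A)$ for self-adjoint $A$ --- while the bulk of the remaining work, especially the passage to the non-unital case via adjunction of a unit, is routine bookkeeping.
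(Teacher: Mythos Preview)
Your argument is the standard Gelfand--Naimark construction and is correct as written; the only subtleties --- automatic continuity of characters, the passage $\varphi(A^\ast)=\overline{\varphi(A)}$ via reality of the spectrum of self-adjoints, the isometry $\|A\|=r(A)$ from the \Cstar-identity, and the Stone--Weierstrass closure --- are all handled properly, including the bookkeeping for the non-unital case. There is nothing to compare against, however: the paper does not prove this theorem at all but merely quotes it as Theorem~2.1.11A of Bratteli--Robinson, so your proposal goes well beyond what the paper itself provides.
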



In other words, (commutative) \Cstar-algebras can be viewed as
abstract algebras of functions. Hence the notion of states as
functionals over $\A$ is suitable for both quantum and classical
theories, as well as any other experimental theories.

Besides, on one hand a \Cstar-algebra can always be mapped to a
closed $\ast$-subalgebra of $\L(\H)$ for some suitable Hilbert space
$\H$. On the other hand, such $\H$ may be obtained by means of the
\emph{GNS construction} that we will present in Section
\ref{sec:irreducibility}.

Finally, just like the operators in $\L(\H)$, the elements of a
\Cstar-algebra possess adjoints, norms, and even spectra: given
$A \in \A$, the spectrum of $A$ is the set
$${\rm spec}(A) = \{ \lambda \in \C \; : \; A-\lambda\mathbbm{1} \text{  is not invertible} \}.$$
Indeed, one may speak about the inverse of an element $A$ in a
\Cstar-algebra since either it has an identity $\mathbbm{1}$, or we
may map $\A$ to a larger algebra $\tilde{\A}$ containing an identity,
a case where we consider the spectrum of $A$ in $\A$ to be its
spectrum as an element in $\tilde{\A}$\footnote{Concretely,
  $\tilde{\A}$ is the algebra with elements
  $(\lambda,\, A) \in \C \times \A$, involution
  $(\lambda,\, A)^\ast = (\overline{\lambda},\, A^\ast)$ and multiplication
  $(\lambda_1,\, A_1)(\lambda_2,\, A_2) = (\lambda_1\lambda_2,\, \lambda_1A_2 +
  \lambda_2A_1 + A_1A_2)$; one easily checks that $(1,\, 0)$ is an
  identity in $\tilde{\A}$, and the convenient way to map $\A$ into
  $\tilde{\A}$ is through the application $A \longmapsto (0,\, A)$.}. As
a result, we can still have the usual interpretation of the spectrum
of a self-adjoint operator as the possible outcomes of physical
experiments. Moreover, some \Cstar-algebras admit traces,
\emph{i.e.} positive functionals acting on them that generalise the
usual notion of trace of an operator \cite{bratteli,dixmier}.

A last comment worth mentioning is that if traditionally one considers
unbounded operators such as momentum or the Hamiltonian as
observables, in practice one never performs measurements capable of
observing the entire set of possible values of momenta or energy when
they are not bounded. Sensors and physical equipment have always a
bounded range within which they are suitable for making measurements, so what
is really done in a physical theory is to account for measurements of
a real observable $A$ within a certain bounded interval
$I \subset \R$, whose correspondent operator $A\chi_I\left(A\right)$
is bounded by $\rm{supp}\, |I|$. Here, $\chi_I$ is the so-called
characteristic function of $I$: for real $x$, the function
$\chi_I\left(x\right)$ equals $1$ for $x\in I$ and $0$ otherwise.  The
operator $\chi_I(A)$ is defined by the functional calculus for
self-adjoint operators $A$. An unbounded observable $A$ shall thus be
thought of as some kind of limit $A\chi_{I_n}\left(A\right)$ along an
increasing sequence of intervals $I_n\nearrow \R$ (for more details,
see the notion of affiliated operators in \cite{bratteli}).


\section{States as Functionals on \Cstar-Algebras}\label{sec:states_as_functionals}

Since we have re-elaborated our concept of observables, that of
states ought to be rediscussed too. Indeed, as we have argued in Section \ref{sec:the_notion}, a state could be thought of as some property of a
physical system that associates to each observable $A$ a probability measure describing the statistical distribution of repeated measurements of $A$ on many copies of the same system.

This can be achieved if  we associate to $A$ the number
$\omega(A)$ corresponding to the mean value of all its outcomes over the several measurements, for if one knows the averages of any observables, including those like ``the frequency of the outcome $a$ in an experiment measuring $A$'', then it is possible to reconstruct the statistical distribution for $A$. For this reason, states will be defined as functions from $\A$ into the theory's scalars, usually $\C$, with the special property that $\omega(A)$ must be a real number if $A$ is indeed a physical observable, \emph{i.e.}, if $A \in \mathcal{O} \subset \A$. 

Other reasonable properties for $\omega$ to have a physical meaning are:
\begin{itemize}
\item \textbf{Positivity.} For a positive observable $A$, one must
  have $\omega(A) \geqslant 0$, for if any possible measurement of $A$ results in
  a positive value, so must be their average. This is equivalent to
  saying that $\omega(A^* A) \geqslant 0$ for any $A \in \A$.
\item \textbf{Boundedness.} The average of a set of values cannot be
  greater than their supremum; the measurable values of an observable
  $A$ are the elements of its spectrum, which is bounded by
  $\|A\|$. It follows that $\omega(A) \leqslant \|A\|$, or, more
  shortly:
\begin{equation}\label{eq:supremum}
  \|\omega\| = \sup_{\underset{\rm observable}{A \neq 0}} \frac{|\omega(A)|}{\|A\|} \leqslant 1.
\end{equation}
For a matter of convenience, we always take the supremum over the entire $\A \setminus \{0\}$.
\item \textbf{Normalisation.} $\omega(A)$ is to be the average of
  measurements of $A$. If the algebra $\A$ has an identity
  $\mathbbm{1}$, whose only measurable value is $1$, we must impose
  $\omega(\mathbbm{1}) = 1$, consequently the supremum in
  \eqref{eq:supremum} is achieved and one gets $\| \omega \| = 1$. If
  the algebra has no identity, there will be a bounded non-decreasing
  net\footnote{\label{ft:order} An increasing net of operators is a
    net such that $A_\nu \geqslant A_\mu$ whenever
    $\nu \geqslant \mu$; the operators' order relation is defined in
    the following way: $A_\nu \geqslant A_\mu$ if $A_\nu - A_\mu$ is a
    positive operator, \emph{i.e.}, if its spectrum is included in
    $\R^+$.} $(A_\nu)_{\nu \in I}$ of operators approximating the
  identity (see Theorem 2.2.18 of \cite{bratteli}), which causes
  $\omega(A_\nu)$ to approximate $1$ and the supremum in
  \eqref{eq:supremum} to be exactly $1$. In any way, we end up with
  $\| \omega \| = 1$.
\item \textbf{Linearity.} In the usual description of Quantum Physics,
  the average of linear combinations of two observables $A$ and $B$,
  $\lambda \in \R$, is given by the combination of their averages:
  $$
  \< A + \lambda B\> = \<A\> + \lambda \<B\>,
  $$
  irrespective to whether $A$ end $B$ are compatible observables or
  not, {\em i.e.}, whether the corresponding operators $A$ and $B$
  commute or not.  Hence, we must have
  $\omega(A+\lambda B) = \omega(A) + \lambda \omega(B)$ at least for
  $A$ and $B$ being themselves observables and $\lambda \in \R$, and
  so being $A+\lambda B$. If one of these is not an observable (for
  instance if $A = i \mathbbm{1}$, $B = -i \mathbbm{1}$ and
  $\lambda = 1$), it is not physically clear what their linear
  combination should mean, nor even which interpretation one should 
  give to $\omega(A)$. Therefore, imposing linearity on $\omega$ with
  respect to $A$ for any $A \in \A$ is an arbitrary choice, so as to
  end up with a linear theory; linearity is apparently not a
  physical requirement unless for the restriction of $\omega$ to real
  linear combinations of elements of $\mathcal{O}$. Nonetheless, the
  theory we obtain doing so seems to be a good description of what is
  actually seen in the laboratory experiments.
\end{itemize}

We are thus led to:
\begin{definition}\label{def:state}
  Given a \Cstar-algebra $\A$, $\omega$ is said to be a state over
  $\A$ if it is a bounded positive linear functional with
  $\|\omega\| = \sup_{\underset{A \neq 0}{A \in \A}}
  \frac{|\omega(A)|}{\|A\|} = 1$.
\end{definition}

   %

\begin{remark}\label{rem:ppc}
  Supposing that a \Cstar-algebra $\A$ has an identity
  $\mathbbm{1} \in \A$, it is possible to show (see {\em e.g.},
  \cite{bratteli,Davidson,Arveson}) that a linear functional $\omega$
  on $\A$ is bounded with $\| \omega \| = \omega(\mathbbm{1})$ if and
  only if it is positive. This close connection between positivity and
  boundedness allows us to require in Definition \ref{def:state} that
  a state be merely a positive linear functional satisfying
  $\omega(\mathbbm{1})=1$. If the algebra has no identity, then it has
  at least a non-decreasing net $\left(A_\nu\right)_{\nu \in I}$
  approximating it (Theorem 2.2.18 of \cite{bratteli}), and this
  remark holds under the form
  $\|\omega\| = \omega\left(\sup_{\nu \in I} A_\nu\right)$.
\end{remark}

The first evident virtue of this new notion of state is
epistemological, as it applies not only to Quantum Physics, but also
for any experimental science whose systems are in states about which
we have information exclusively through series of measurements, so we
can only determine them by the statistical profile of the data we
gather. This holds regardless of any predefined deterministic notion
of state, and regardless of having or having not complete knowledge
about the system that one might obtain from a totally accurate
measurement.

An obvious example of such an experimental theory is Classical Physics
itself, and in Section \ref{sec:states_experiment} we show how this
rigorous notion of state fits perfectly the classical situation from
the laboratory's point of view (recall Theorem
\ref{th:algebra_isomorphism_function} above).

Another of this concept's advantages is that it does encompass more
quantum states than previously; in Section
\ref{sec:example_not_vector} we exhibit a state that cannot be written
neither as a vector nor as a density operator. In this case, however,
the reader should pay attention to the point that this state will be
defined through a process of taking limits, what is not fortuitous. In
fact, any state on a \Cstar-algebra $\A$ may be approximated by a
sequence of states that correspond to density operators when $\A$ is
realised as a concrete operator algebra $\mathcal{L}(\H)$ (see
Theorems \ref{th:algebra_isomorphism},
\ref{th:normal_density_operator} and \ref{th:normal_is_dense}).

Finally. we notice that if $\omega_1$ and $ \omega_2$ are two states in $\A$
and $\lambda \in [0, \; 1]$, it can be easily verified form the
definition that the convex combination
$\lambda \omega_1 + (1-\lambda)\omega_2$ is also a state in $\A$.
This remark will be essential for the notion of pure and mixed states below.


\subsection{Pure and mixed states}\label{sec:Pureandmixedstates}

Now we come to the central notions in this discussion.  As informally
described above, \emph{pure states} are those that cannot be written
as a convex combination of other states. This concept may be
formalised for \Cstar-algebras in the following way:

Let be $\A$ a \Cstar-algebra, and $\omega$ a state on it. $\omega$ is
said to be pure if, given a scalar $\lambda \in [0, \; 1]$ and states
$\omega_1 \neq \omega_2$ on $\A$ such that
$$\omega = \lambda \omega_1 + (1-\lambda)\omega_2,$$
then necessarily $\lambda = 0$ or $\lambda = 1$. Otherwise, $\omega$
is said to be a mixture of $\omega_1$ and $\omega_2$.

In Section \ref{sec:irreducibility} we will present an important
criterion for characterising a state $\omega$ on $\A$ as pure. It will
be done by means of evaluating the reducibility of the representation
of $\A$ into a specific algebra of operators on a Hilbert space
especially constructed for $\A$ and $\omega$, the so-called \emph{GNS
  representation}. This characterisation is very important, because it
makes clear the point that a mixed state is a sort of composit system
where each component can be analysed by an independent algebra of
observables, namely, by an irreducible representation in the
\emph{GNS} Hilbert space of the abstract observable algebra.

\subsection{Pure and vector states}\label{sec:pure_and_vector}

When the \Cstar-algebra is a subalgebra of $\L(\H)$ for some Hilbert
space $\H$, there is a natural way of defining states on it, which is
to take a $\Psi \in \H$ with norm $\|\Psi\| = 1$ and put
$$\omega_\Psi(A) = \<\Psi,\, A\Psi\>$$
for any $A \in \A$. Proving that this $\omega_\Psi$ is a true state in
the sense of Definition \ref{def:state} is straightforward: obviously
it is linear and positive (for
$\omega_\Psi(A^* A) = \|A\Psi\|^2 \geqslant 0$), so by Remark
\ref{rem:ppc}, $\|\omega_\Psi\| = \|\Psi\|^2 = 1$.

 A state $\omega$ on $\A \subset \mathcal{L}(\H)$ such that there exists a unit vector $\Psi$ in $\H$ satisfying
\[\omega(A) = \<\Psi,\, A\Psi\>,\]
for any $A \in \A$ is said to be a vector state.

If the \Cstar-algebra considered is the algebra of all continuous
operators acting on a Hilbert space, one has the following important
statement:

\begin{theorem}\label{th:vector_is_pure}
 Any vector state on the \Cstar-algebra $\L(\H)$ is pure.
\end{theorem}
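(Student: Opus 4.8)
The plan is to argue by contradiction. Suppose $\omega_\Psi = \lambda\omega_1 + (1-\lambda)\omega_2$ for some $\lambda \in (0, \; 1)$ and states $\omega_1 \neq \omega_2$ on $\L(\H)$; I will show that this forces $\omega_1 = \omega_2 = \omega_\Psi$, a contradiction, whence $\lambda \in \{0,\,1\}$ and $\omega_\Psi$ is pure. The whole argument turns on one elementary observation: since $\omega_2$ is positive, the functional $\omega_\Psi - \lambda\omega_1 = (1-\lambda)\omega_2$ is positive as well, so $\lambda\,\omega_1(B) \le \omega_\Psi(B)$ for every positive $B \in \L(\H)$. The element I would feed into this inequality is $B = \UM - P$, where $P := \left|\Psi\>\<\Psi\right|$ is the rank-one orthogonal projection onto $\C\Psi$ (here we use that $\L(\H)$ is unital, so that $\UM$, $P$ and $\UM - P$ are bona fide elements of the algebra — this is precisely why the theorem is stated for all of $\L(\H)$). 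Since $P\Psi = \Psi$, one has $\omega_\Psi(\UM - P) = \<\Psi, \Psi\> - \<\Psi, P\Psi\> = 1 - 1 = 0$, and combined with $0 \le \UM - P$ this gives $0 \le \lambda\,\omega_1(\UM - P) \le \omega_\Psi(\UM - P) = 0$; as $\lambda > 0$ we conclude $\omega_1(\UM - P) = 0$, i.e. $\omega_1(P) = \omega_1(\UM) = 1$.

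The key step is then to deduce from $\omega_1(\UM - P) = 0$ that $\omega_1$ is ``concentrated on'' $P$, in the sense that $\omega_1(A) = \omega_1(PAP)$ for every $A \in \L(\H)$. For this I invoke the Cauchy--Schwarz inequality attached to the positive sesquilinear form $(X, \; Y) \longmapsto \omega_1(X^\ast Y)$, valid for any state: for an arbitrary $C \in \L(\H)$ one gets $\big|\omega_1\big(C(\UM - P)\big)\big|^2 \le \omega_1(CC^\ast)\,\omega_1\big((\UM-P)^2\big) = 0$, and symmetrically $\omega_1\big((\UM - P)C\big) = 0$. Writing $A = PAP + PA(\UM - P) + (\UM - P)AP + (\UM - P)A(\UM - P)$ and applying these vanishings to the last three summands yields $\omega_1(A) = \omega_1(PAP)$. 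Since $P$ is rank one, $PAP = \<\Psi, A\Psi\>\,P = \omega_\Psi(A)\,P$, and therefore $\omega_1(A) = \omega_\Psi(A)\,\omega_1(P) = \omega_\Psi(A)$. This holds for every $A$, so $\omega_1 = \omega_\Psi$; running the identical argument with $\omega_2$ in place of $\omega_1$ gives $\omega_2 = \omega_\Psi$, contradicting $\omega_1 \neq \omega_2$.

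The step I expect to require the most care is the ``concentration'' lemma of the second paragraph: one must correctly bookkeep the four-term decomposition of $A$, make sure the off-diagonal corners are killed by Cauchy--Schwarz in both the left and the right versions, and verify the identity $PAP = \omega_\Psi(A)\,P$ for the rank-one projection $P$. Everything else is a one-line positivity estimate, and no appeal to the concrete structure of $\H$ is needed beyond the presence of the unit and of the rank-one projection $P$.
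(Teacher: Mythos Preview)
Your proof is correct and follows essentially the same route as the paper's argument (given there for the slightly more general Theorem~\ref{th:vector_is_pure-generalisation}): introduce the rank-one projection $P$ onto $\C\Psi$, use positivity to get $\omega_a(\UM-P)=0$, kill the off-diagonal pieces of the four-term decomposition via Cauchy--Schwarz, and finish with $PAP=\langle\Psi,\,A\Psi\rangle P$. The only cosmetic difference is that you obtain $\omega_1(\UM-P)=0$ from the inequality $\lambda\,\omega_1\le\omega_\Psi$ on positives, whereas the paper writes $0=\lambda\,\omega_1(\UM-P)+(1-\lambda)\,\omega_2(\UM-P)$ as a sum of two non-negative terms; these are equivalent observations.
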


This theorem is a particular case of Theorem
\ref{th:vector_is_pure-generalisation}, whose proof is given below.

This theorem may explain why some Quantum Mechanics textbooks present
the concepts of pure and of vector states as equivalent. It is
important to emphasise, however, that these concepts are not
equivalent, in general, and one may find situations, even physically
relevant ones, where certain pure states are not vector states and
certain vector states are not pure.

Indeed, in Section \ref{sec:example_not_vector} we exhibit an example
of a pure state that is not a vector state. Moreover, in the presence
of superselection rules, one can see on physical grounds that not all
vector states are pure.

As an example, consider the simple case where we have two
superselection sectors corresponding to some conserved ``charge''
assuming two distinct values. The physical Hilbert space $\H$ is a
direct sum of two mutually orthogonal subspaces $\H=\H_1\oplus \H_2$,
but the algebra of observables cannot be the whole $\L(\H)$, since there
are operators in $\L(\H)$ that map $\H_1$ into $\H_2$ (and
vice-versa), violating the superselection rule. Thus, the algebra of
observables has to be a subalgebra of $\L(\H_1)\oplus \L(\H_2)$. Let
us assume for simplicity that the algebra of observables coincides
with $\L(\H_1)\oplus \L(\H_2)$.
Take a vector in $\H$ in the form
$\Psi=\big(a_1 \psi_1\big)\oplus \big(a_2 \psi_2\big)$, where
$\psi_1\in\H_1$ and $\psi_2\in\H_2$ are normalised vectors ({\em
  i.e.}, $\|\psi_1\|_{\H_1}=\|\psi_2\|_{\H_2}=1$) and where $a_1$ and
$a_2$ are non-zero complex numbers with $|a_1|^2+|a_2|^2=1$ . Then,
$\Psi$ is also normalised and, for any observable $A=A_1\oplus A_2$,
one has, 
$$
\omega_\Psi(A) 
\; = \;
|a_1|^2\omega_1(A) + |a_2|^2\omega_2(A) \; , 
$$
where
$\omega_1(A)=\omega_1\big(A_1\oplus A_2\big):=\omega_{\psi_1}(A_1)$
and
$\omega_2(A)=\omega_2\big(A_1\oplus A_2\big):=\omega_{\psi_2}(A_2)$
are two states on $\L(\H_1)\oplus \L(\H_2)$.  Thus, the vector state
$\omega_\Psi$ is not a pure state on $\L(\H_1)\oplus \L(\H_2)$, but a
mixture of $\omega_1$ and $\omega_2$. The relation between purity and
indecomposability of the algebra will be further discussed in Section
\ref{sec:irreducibility}.

Although being a pure state does not imply being a vector state in the
general picture, as stressed above, in some important cases this
happens to be true; together with Theorem \ref{th:vector_is_pure},
this means that in these situations both concepts are indeed
equivalent. In Section \ref{sec:example_pure_is_vector}, for instance,
we show that for a \Cstar-algebra composed by all the compact
operators on some Hilbert space, all pure states are vector ones.

We shall need a technical generalisation of  Theorem \ref{th:vector_is_pure}:

\begin{theorem}\label{th:vector_is_pure-generalisation}
  Let $\H$ be some Hilbert space and $\A\subset\L(\H)$ be a
  $C^\ast$-subalgebra of $\L(\H)$.  Consider a normalised vector
  $\Phi\in\H$. If the orthogonal projection on the subspace generated
  by $\Phi$ is an element of $\A$, then the vector state
  $\omega_{\Phi}$ on $\A$ is pure.
\end{theorem}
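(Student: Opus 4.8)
The plan is to show directly that if $\omega_\Phi = \lambda\omega_1 + (1-\lambda)\omega_2$ with $\lambda \in (0,1)$ and $\omega_1,\omega_2$ states on $\A$, then $\omega_1 = \omega_2 = \omega_\Phi$, forcing purity. Write $P$ for the orthogonal projection onto $\C\Phi$, which by hypothesis lies in $\A$; note $P^\ast = P = P^2$ and $P\Phi = \Phi$, so $\omega_\Phi(P) = \langle\Phi, P\Phi\rangle = 1$. First I would use positivity of $\omega_1,\omega_2$ together with $\omega_\Phi(P)=1$ and $\|\omega_j\|=1$ to deduce $\omega_1(P) = \omega_2(P) = 1$: indeed $1 = \lambda\omega_1(P) + (1-\lambda)\omega_2(P)$ with each $\omega_j(P) \leqslant \|P\| = 1$ and $\lambda \in (0,1)$ forces equality in each slot.

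Next I would exploit the one-dimensionality of $P$. For any $A \in \A$, the element $PAP$ lies in $\A$ and, as an operator, acts on $\Phi$ by $PAP\,\Phi = \langle\Phi, A\Phi\rangle\,\Phi = \omega_\Phi(A)\,\Phi$; on the orthogonal complement of $\Phi$ it vanishes. Hence $PAP = \omega_\Phi(A)\,P$ as an identity in $\A$. Applying $\omega_j$ and using linearity gives $\omega_j(PAP) = \omega_\Phi(A)\,\omega_j(P) = \omega_\Phi(A)$, for $j = 1,2$.

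It remains to show $\omega_j(PAP) = \omega_j(A)$, which will follow from a Cauchy–Schwarz argument for the positive sesquilinear form $(X,Y) \mapsto \omega_j(X^\ast Y)$ on $\A$. The key observation is that $\omega_j\big((\UM - P)^\ast(\UM-P)\big) = \omega_j(\UM - P) = 1 - \omega_j(P) = 0$ (here $\UM - P$ is a projection, so it equals its own square and adjoint), i.e. $\UM - P$ is a null vector for this form. Therefore, by the Cauchy–Schwarz inequality $|\omega_j(X^\ast Y)|^2 \leqslant \omega_j(X^\ast X)\,\omega_j(Y^\ast Y)$, any term involving a factor $\UM - P$ contributes zero. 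Writing $A = PAP + PA(\UM-P) + (\UM-P)AP + (\UM-P)A(\UM-P)$ and applying $\omega_j$, all but the first term vanish, so $\omega_j(A) = \omega_j(PAP) = \omega_\Phi(A)$ for every $A \in \A$. Thus $\omega_1 = \omega_2 = \omega_\Phi$, contradicting $\omega_1 \neq \omega_2$ unless $\lambda \in \{0,1\}$; hence $\omega_\Phi$ is pure.

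The main obstacle is the careful handling of the case where $\A$ may lack an identity: one must either pass to the unitisation $\widetilde{\A}$ (extending $\omega_j$ and noting $P$ still makes sense as an operator on $\H$ even if $\UM \notin \A$, using the approximate-identity description of $\|\omega_j\|$ from Remark \ref{rem:ppc}), or phrase the decomposition $A = PAP + \cdots$ entirely inside $\A$ by noting that $PA, AP, PAP \in \A$ while $A - PAP = (A - AP) + (AP - PAP)$ and estimating $\omega_j(A - AP)$ and $\omega_j(AP - PAP)$ directly via Cauchy–Schwarz against $\omega_j\big((\UM-P)A^\ast A(\UM - P)\big)$, a quantity controlled by $\|A\|^2\,\omega_j(\UM - P) = 0$. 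Everything else is routine.
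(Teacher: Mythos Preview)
Your proof is correct and follows essentially the same route as the paper's: both argue by showing $\omega_j(P)=1$ (equivalently $\omega_j(\UM-P)=0$), use Cauchy--Schwarz to annihilate the three cross terms in the decomposition $A=PAP+PA(\UM-P)+(\UM-P)AP+(\UM-P)A(\UM-P)$, and invoke the rank-one identity $PAP=\omega_\Phi(A)P$ to conclude $\omega_j(A)=\omega_\Phi(A)$. The only cosmetic difference is the order of the steps (you establish $\omega_j(P)=1$ via $\omega_j(P)\leq\|P\|=1$ before the decomposition, whereas the paper first shows $\omega_a(\UM-E)=0$ by positivity and only recovers $\omega_a(E)=1$ at the end), and your discussion of the non-unital case is slightly more explicit than the paper's, which simply defers to approximate identities.
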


In the case when $\A=\L(\H)$, this implies Theorem
\ref{th:vector_is_pure}, above, since in this case all orthogonal
projectors on the unidimensional subspaces generated by the vectors of
$\H$ belong to $\A$.

\begin{proof}
  We follow closely the joint proof of Theorem 2.8 and of Lemma 2.9 in \cite{araki}.
  For simplicity, let us assume that $\A$ contains a unit $\UM$.

  By contradiction, let us assume that $\omega_\Phi$ is a mixed state on $\A$. Then, there are
  $\lambda\in (0, \; 1)$ and two distinct states $\omega_1$ and $\omega_2$ on $\A$ such that
  \begin{equation}\label{eq:assumption}
    \<\Phi,\, A\Phi\>\; = \; \lambda \omega_1(A)+(1-\lambda)\omega_2(A)
   \end{equation}
   for all $A\in\A$. Let $E$ be the orthogonal projector on the
   one-dimensional subspace generated by $\Phi$. By assumption
   $E\in\A$ and $\UM - E \in\A$, and we may write
  $$
   0\; = \; \<\Phi,\, (\UM-E)\Phi\> \; = \; \lambda  \omega_1(\UM-E)+(1-\lambda)\omega_2(\UM-E)\; .
   $$
   It follows from this that
   $\omega_1(\UM-E)=\omega_2(\UM-E)=0$. Using the Cauchy-Schwartz
   inequality, one has for all $B\in\A$,
   $$
   \Big|\omega_a\big((\UM-E)B\big)\Big|^2
   \; \leq \;
   \omega_a\big((\UM-E) (\UM-E)^*\big)
   \omega_a\big(B^*B\big)
   \; = \;
   \omega_a(\UM-E) \omega_a\big(BB^*\big)
   \; = \; 0\;,
   $$ 
   what implies $\omega_a\big((\UM-E)B\big)=0$ for both $a=1, \;
   2$. Analogously, one has $\omega_a\big(B(\UM-E)\big)=0$, for both
   $a=1, \; 2$.  Since
   $$
     B \; = \; \big(E + (\UM-E)\big)B\big(E + (\UM-E)\big)
     \; = \;
     EBE + (\UM-E) BE + EB (\UM-E) + (\UM-E)B(\UM-E)
     \;,
     $$
     one has $\omega_a(B)=\omega_a(E B E)$  for both    $a=1, \; 2$. 
     Now, for any $\Psi\in\H$ one has $E\Psi=\<\Phi, \Psi\>\Phi$ and,
     hence,
     $$
     E B E\Psi \, = \, \<\Phi,\, \Psi\>E B\Phi \, = \,
     \<\Phi,\, \Psi\>\<\Phi, \; B\Phi\>\Phi
     \, = \,
     \<\Phi, \; B\Phi\> \big( \<\Phi, \Psi\>\Phi \big)
     \, = \,  \<\Phi, \; B\Phi\> E\Psi\;,
     $$
     which implies $E B E=\<\Phi, \; B\Phi\> E$. It follows that
     $\omega_a(B)=\omega_a(E B E)=\<\Phi, \;
     B\Phi\>\omega_a(E)$. Taking, in particular, $B=\UM$, this says
     that $1=\omega_a(E)$ for both $a=1, \; 2$. Hence,
     $\omega_a(B)=\<\Phi, \; B\Phi\>$ for each $a=1, \; 2$ and for all
     $B\in\A$, what contradicts \eqref{eq:assumption} with $\omega_1$
     and $\omega_2$ being distinct. This completes the proof.

     The case when $\A$ does not contain a unit can be treated
     similarly by using the so-called approximants of the identity (see,
     {\em e.g.}, \cite{bratteli}).
\end{proof}

\section{Examples}\label{sec:examples}


\subsection{Mixtures of vector states in Quantum Mechanics}\label{sec:example_mixture}
Let us consider the simple case where our \Cstar-algebra is just
$\L(\H)$, for $\H$ a Hilbert space representing a non-interacting
two-level system of particles, namely $\H = \C^2$, where level $1$ is
represented by
$\mathbf{e}_1 = \left( \begin{smallmatrix} 1 \\ 0\end{smallmatrix} \right)$ and
level $2$ by
$\mathbf{e}_2 = \left( \begin{smallmatrix} 0 \\ 1 \end{smallmatrix} \right)$. In
this section we are going to show that in general there is no vector state
$\Psi \in \C^2$ that could represent a mixture of states $\Phi_1$ and
$\Phi_2$ with respective statistical weights $p_1$ and $p_2$,
\emph{i.e.}, that it may be that no $\Psi$ is such that
$\<\Psi,\, A\Psi\> = p_1 \<\Phi_1,\, A\Phi_1\> + p_2 \<\Phi_2,\, A\Phi_2\>$ for
all observables, that turn out to be self-adjoint operators
$A \in \L(\H)$ (in this case, Hermitian $2\times 2$ complex matrices).

Take a mixture of $p_1 = \frac{1}{3}$ particles in state
$\Phi_1 = \mathbf{e}_1$ and $p_2 = \frac{2}{3}$ in
$\Phi_2 = \mathbf{e}_2$; put
$\Psi = \left( \begin{smallmatrix} \psi_1 \\ \psi_2 \end{smallmatrix} \right)$,
with $\psi_1,\psi_2 \in \C$. In order that the mean values of an
observable
$A = \left( \begin{smallmatrix} a_1 & 0 \\ 0 & a_2 \end{smallmatrix} \right)$
coincide when measured for $\Psi$ and for the mixture, it is necessary
that $|\psi_1|^2 = \frac{1}{3}$ and $|\psi_2|^2 = \frac{2}{3}$; we
could thus choose $\psi_2 = \sqrt{\frac{2}{3}}$ and
$\psi_1 = e^{i\gamma} \sqrt{\frac{1}{3}}$, with some phase
$\gamma \in \R$ that can be fixed if we take an observable
$B = \left( \begin{smallmatrix} 0 & b \\ b & 0\end{smallmatrix} \right)$ and
impose again that the mean values of $B$ must coincide in both cases:
we find out that $\gamma = \frac{\pi}{2}$. As a conclusion, one has
$\Psi = \frac{1}{\sqrt{3}} \left( \begin{smallmatrix} i \\
    \sqrt{2}\end{smallmatrix} \right)$.

Nonetheless, take the observable
$C = \left( \begin{smallmatrix} 0 & i \\ -i & 0 \end{smallmatrix} \right)$. The
mean value of $C$ for the mixture is $0$, whereas
$\<\Psi,C\Psi\> = 2\sqrt{2}$. Therefore, no state $\Psi \in \C^2$ may
represent the mixture consisting of $\frac{1}{3}$ of particles in the
state $\textbf{e}_1$ and $\frac{2}{3}$ of them in $\textbf{e}_2$.

In fact, the algebra of observables of a two-level system coincides
with the set of all complex $2\times 2$ matrices.  Moreover, we know
(see Sec.\ \ref{sec:normal_states}) that a general state for this
algebra is of the form $\omega_\rho(A)=\Tr(\rho A)$, where $\rho$, the
so-called {\em density matrix }, is a self-adjoint and positive matrix
such that $\Tr\rho=1$. Since $\rho$ is a self-adjoint operator, it can
be written in the form
$\rho=\frac{1}{2} \big( a_0\UM+\vec{a}\cdot \vec{\sigma} \big)$, where
$\vec{a}\cdot \vec{\sigma}$ is a notation for
$a_1\sigma_1+a_2\sigma_2+a_3\sigma_3$, with $a_k$,
$k=0, \, \ldots , \, 3$, being real numbers and $\sigma_l$,
$l=1, \, 2, \, 3$, being the Pauli matrices\footnote{The
  matrices $\UM$, $\sigma_1$, $\sigma_2$ and $\sigma_3$ are a basis in
  the real space of the $2\times 2$ self-adjoint matrices.}. The
condition $\Tr\rho=1$ implies $a_0=1$. In this case, the eigenvalues
of $\rho$ are $ \varrho_1 = \frac{1+\|\vec{a}\|}{2} $ and
$ \varrho_2 = \frac{1-\|\vec{a}\|}{2} $ and, hence, the condition of
$\rho$ having strictly positive eigenvalues is $\|\vec{a}\| < 1$. For
$\|\vec{a}\| = 1$ one has $\varrho_1=1$ and $\varrho_2=0$, implying
that the matrix $\rho$ is an orthogonal projector.  Hence, we can
associate the space of states for a two-level system with a closed
unit sphere centred at the origin in three dimensional space, the
so-called Bloch sphere, with the pure states being those on the surface
of the sphere (corresponding to $\|\vec{a}\|=1$) and with the mixed
states being inside the sphere (corresponding to $\|\vec{a}\|<1$).


\subsection{Experiment-determined states in Classical Physics}\label{sec:states_experiment}
Classically, the trajectory of a particle submitted to a smooth
potential remains completely determined given its position and
momentum at a certain instant; besides, any other physical quantity
may be expressed in terms of these data, like the energy and the
angular momenta. Thus, the phase space becomes a natural environment for
describing the states of classical systems, whose observables are
scalar functions defined over this phase space.

Effectively, although in the classical theory the particle's states
are understood as the points of the phase space, even the classical
experiments alone cannot determine them with total accuracy: an
experiment only furnishes a set of data permitting one to depict a
(usually continuous) probability distribution with a mean value and a
standard deviation, which allows us to calculate the probability of
finding a particle within some region in the phase space, though not
to assert that the particle will be precisely in one point or
another. This does not imply that points are not states; they indeed
are, since in Classical Mechanics we admit the particle to have fully
determined values of position and momentum, could we measure them or
not. However, this rather indicates the need for more states with
lesser localisation properties in order to take into account the fact
that we may also have access to reality through experiments subjected
to statistical errors.

Actually, it is possible to realise a commutative \Cstar-algebra as
an algebra of $C_0$ functions on some phase space (as reads Theorem
\ref{th:algebra_isomorphism_function}), the states of which being linear
positive and normalised functionals on such continuous functions. By
the Riesz-Markov representation theorem (see {\em e.g.},
\cite{rudin}), a linear, bounded and positive functional $\omega$
acting on such a function $f$ can be written as an integral over a
probability measure:
$$
 \omega(f) = \int f d\mu_\omega,
$$    
where $\mu_\omega$ is a positive measure over the phase space; the
normalisation condition gives, moreover, $\int d\mu_{\omega} = 1$, so
$\mu_\omega$ may be interpreted as a probability measure whose events
are points and (open) regions of the phase space, just as one would need in
order to represent Classical Physics as a theory seriously committed
to experimental results and their intrinsic uncertainty.

Of course this can only be done if $\A$ is commutative; however, since
commutativity basically means that the observables are compatible,
\emph{i.e.}, can be measured at the same time, and in Classical
Physics this is always the case (for it is supposed that a measurement
does not modify a system's state), such a restriction on the
\Cstar-algebras is but very natural.

Notice, though, that Dirac delta measures centred at a single point
in phase space of a classical mechanical system, are genuine
probability measures, so states of completely defined position and
momentum are not excluded; on the contrary, these are precisely the
classical theory's pure states. Consequently, this framework does not
exclude the possibility of a classical particle to have a definite
position and momentum. The novelty here is that our inability of
knowing them is not excluded as well, inasmuch as states are allowed
that correspond to the actual results of our experiments.

In this context, it is also relevant to remark that, in Classical
Mechanics, the variance of any observable ({\em i.e.}, a measurable
function in phase space) in a pure state vanishes identically, since
pure states are expressed by Dirac delta measures centred at a point in
phase space.  Generally, this characteristic is not shared by pure states
in quantum systems and this is one of the most relevant distinctions
between these theories.

\section{Pure Quantum States and Their Classical Limits}\label{sec:quantum_classical}

One of the physically interesting questions in this context is about
what happens with the purity of a given state of a quantum system
when the classical limit is taken. In this section we will address this
question. Naively one could believe that the purity is preserved but,
as we will now discuss, there are some interesting situations where a
pure quantum state is transformed into a classical mixed one.

As we have seen above, the pure states in Classical Physics are the
one-particle well-defined position and momentum states, namely the
measures $\delta_{x_0} \otimes \delta_{\xi_0}$ charging points
$(x_0,\xi_0)$ of the phase space (for simplicity, we assume it to be
$\R^2$). In the quantum picture for a system described by observables
in $\A = \L(\H)$, with $\H$ a Hilbert space, given an adequate normalised
vector state $\psi \in \H$ one can consider the following
$\hbar$-dependent family of wave-packets:
\begin{equation}\label{eq:wave_packet}
\psi^\hbar (x) = \frac{1}{\hbar^{\frac{1}{4}}} \psi\left(\frac{x-x_0}{\sqrt{\hbar}}\right) e^{\frac{i}{\hbar}x \cdot \xi_0}, \;\; {\rm for} \; \hbar > 0 ;
\end{equation}
it is easy to see that they satisfy
$$\lim_{\hbar \rightarrow 0} \< \psi^\hbar, \hat{H}^\hbar \psi^\hbar \>_\H = E(x_0,\xi_0),$$
where $\hat{H}^\hbar$ is the usual Hamiltonian operator (with some not
too increasing potential, say $|V(x)| \lesssim x^2$), and $E$ is the
corresponding classical energy:
$$\hat{H}^\hbar = -\frac{\hbar^2}{2}\Delta + V \qquad {\rm and} \qquad E = \frac{1}{2}\xi^2 + V(x).$$

In order to show this fact, one merely notes that $\hat{H}^\hbar$ can
be given by the integral formula
\begin{equation}\label{eq:kuetbhgiu4yb}
 \left(\hat{H}^\hbar \psi\right) (x)
 = \frac{1}{2\pi\hbar} \int_{\R_\xi} \int_{\R_y}
 e^{\frac{i}{\hbar}\xi \cdot(x-y)}
 E\left(\frac{x+y}{2},\xi\right) \psi(y) \, dy \, d\xi.
\end{equation}

Notice that, so far, we ought to take $\psi$ such that the integral in \eqref{eq:kuetbhgiu4yb}
converges, which can be assured if we take $\psi$ with sufficient decay in frequency.
However, this can be extended if we consider a function $a \in C_0^\infty(\R^2)$, so as the operator
 $\op_\hbar (a)$, defined by
$$
\big( \op_\hbar(a) \psi\big) (x)
= \frac{1}{2\pi\hbar} \int_{\R_\xi} \int_{\R_y} e
^{\frac{i}{\hbar}\xi \cdot(x-y)} a\left(\frac{x+y}{2},\xi\right) \psi(y) \, dy \, d\xi,
$$
is bounded. Hence, $\op_h(a) \in \A$ for all $\hbar > 0$. Moreover, a few
integral calculations show that:
$$\lim_{\hbar \rightarrow 0} \< \psi^\hbar, \op_\hbar(a) \psi^\hbar \>_\H = a(x_0,\xi_0).$$
The operator $\op_\hbar (a)$ is known in the literature as the
\emph{Weyl quantisation of the symbol} $a$ (see for instance
\cite{dimassi_sjostrand,zworski}, or \cite{chabu_tese} and the
references quoted therein).

With more generality, a probability measure $\mu$ on $\R^2$ is said to be the
\emph{Wigner} or \emph{semiclassical
  measure}\cite{chabu_tese,patrick_gerard,paul} associated to a
normalised family of vectors $(\psi^\hbar)_{\hbar > 0} \subset \H$ if,
for any $a \in C_0^\infty(\R^2)$, one has:
$$
\lim_{\hbar \rightarrow 0}\< \psi^\hbar, \op_\hbar(a) \psi^\hbar \>_\H
\; = \; \int_{\R^2} a(x,\xi) \mu(dx,d\xi).
$$
Within our framework of understanding states as linear
functionals over a \Cstar-algebra, we can write down formally:
$$\mu = {\rm sc}\lim_{\hbar} \omega^\hbar,$$
where $\omega^\hbar$ are states on the $C^\ast$-subalgebra
$\Upsilon \subset \A$ generated by the Weyl quantisation of symbols in
$C_0^\infty(\R^2)$, given by
$\omega^\hbar\left(A\right) = \< \psi^\hbar, A\psi^\hbar \>_\H$ for
$A \in \Upsilon$. Since $\mu(\R^2) = 1$, one can understand $\mu$ as a
classical state on a commutative \Cstar-algebra describing the
classical observables (considered as the elements of $C_0^\infty(\R^2)$), 
as in Section \ref{sec:states_experiment}.

$\Upsilon$ is indeed an algebra, for $\op_\hbar(a) \op_\hbar(b) = \op_\hbar(a\ast_\hbar b)$, with
$$
\left(a \ast_\hbar b\right)(x,\xi) = \left. e^{\frac{i\hbar}{2}
    (\partial_\xi\partial_y - \partial_x\partial_\eta)
  }\left(a(x,\xi)b(y,\eta)\right) \right|_{y = x,\,\eta =\xi};
$$
see Theorem 4.11 in \cite{zworski}. The product defined by
$\ast_\hbar$, often called the Moyal product, or Weyl–Groenewold
product, is non-commutative but is associative. The other properties
required for being a \Cstar-algebra come either trivially from the
quantisation formula (linear structure, involution) and the fact that
we are inside an algebra of bounded operators (Banach and $C^\ast$
properties), or from the definition of a generated algebra (as a
completion within the operator norm).

In order to fully characterise the Wigner measures associated to a
bounded family of vector states, it suffices to consider observables
in $\Upsilon$, since, due to the dominated convergence theorem, it is
sufficient to test a measure against functions in $C_0^\infty(\R^2)$
for entirely depicting it over $\R^2$.

A natural question that arises at this point is whether the
semiclassical limit of a family of pure states will always be pure or
not. The answer was given in \cite{chabu}, where the author showed
that a family of wave-packets not much different from
\eqref{eq:wave_packet} may split into the combination of two Dirac
delta measures on the phase space. Even better: when regarding a
time-dependent situation where the quantum states evolve according to
a dynamical equation like Schrödinger's, and their associate Wigner
measures evolve correspondingly, it is possible to have an initial state of
quantum pure states concentrating to a classical pure state, and
keeping like that for a while, but then degenerating into a classical
mixture afterwards.

More precisely, take the conical potential $V(x) = - |x|$. By the
Kato-Rellich theorem \cite{reed_simon-2}, it is known that the Schrödinger
propagator $e^{-\frac{it}{\hbar}\hat{H}}$ is unitary for any
$t \in \R$, thus, given any initial vector state $\psi_0 \in \H$, its
evolution for a time $t$, which is
$\psi_t = e^{-\frac{it}{\hbar}\hat{H}} \psi_0$, will be well and
uniquely defined. According to Theorem 1.12 of \cite{chabu}, the
semiclassical measures $\mu^1_t$ and $\mu^2_t$ associated to the
evolution of the initial data
$$\Psi^{\hbar,1}_0(x) = \frac{1}{\hbar^{\frac{1}{4}}} \Psi^1\left(\frac{x}{\sqrt{\hbar}}\right) \quad {\rm and} \quad \Psi^{\hbar,2}_0(x) = \frac{1}{\hbar^{\frac{1}{4}}} \Psi^2\left(\frac{x}{\sqrt{\hbar}}\right)e^{-i \hbar^{\beta -1} x}$$
will be, under proper choices of parameters\footnote{With
  $0 < \beta < \frac{1}{10}$, $\Psi^1, \Psi^2 \in C_0^\infty(\R)$ and
  $\Psi^1$ supported on $x > 0$.}, for $t \leqslant 0$:
$$\mu_t^1(x,\xi) = \mu_t^2(x,\xi) = \delta\left(x-\frac{t^2}{2}\right) \otimes \delta\left(\xi + t\right);$$
nevertheless, for $t > 0$, we will have:
$$\mu_t^1(x,\xi) = \delta\left(x-\frac{t^2}{2}\right) \otimes \delta\left(\xi - t\right),$$
whereas
$$\mu_t^2(x,\xi) = \delta\left(x+\frac{t^2}{2}\right) \otimes \delta\left(\xi + t\right).$$

In pictures, the particle $\mu^1$ follows the path in Figure
\ref{fig:paths}(a), and the particle $\mu^2$ moves as in Figure
\ref{fig:paths}(b).

\begin{figure}[htpb!]\center\footnotesize
\begin{multicols}{2}
\includegraphics[width=0.3\textwidth]{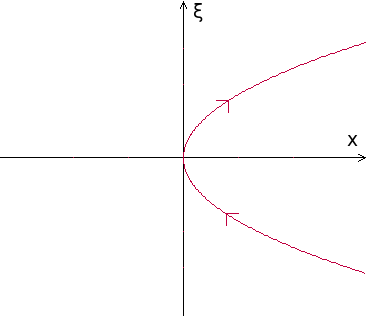} \text{(a) Trajectory of $\mu^1$.} \\
\includegraphics[width=0.3\textwidth]{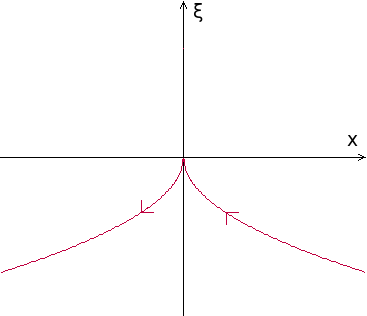} \text{(b) Trajectory of $\mu^2$.} \\
\end{multicols}
\caption{\footnotesize Trajectories followed by two different particles, coinciding for $t \leqslant 0$, but then diverging for $t > 0$. Source: \cite{chabu}.}
\label{fig:paths}
\end{figure}

Now, let us put
$\Psi^\hbar_0 = p_1 \Psi^{\hbar,1}_0 + p_2 \Psi^{\hbar,2}_0$, with
$p_1^2 + p_2^2 = 1$ (suppose
$\rm{supp}\, \Psi^1 \cap \,\rm{supp}\, \Psi^2 = \emptyset$ so we do
not need to bother about any other normalisation constants). Denoting
by $\mu_t$ the semiclassical measure related to the family
$\Psi_t^\hbar = e^{-\frac{it}{\hbar}\hat{H}}\Psi_0^\hbar$, it is
possible to show (see Proposition 3.1 and its proof in \cite{chabu})
that $\mu_t = p_1^2 \mu^1_t + p_2^2 \mu_t^2$, or, in other words,
$$
\mu_t(x,\xi) =
\delta\left(x-\frac{t^2}{2}\right) \otimes \delta\left(\xi + t\right)
\;\; {\rm for} \; t \leqslant 0,
$$
a classical pure state, and
$$
\mu_t(x,\xi) =
p_1^2 \delta\left(x-\frac{t^2}{2}\right) \otimes \delta\left(\xi - t\right)
+ p_2^2 \delta\left(x+\frac{t^2}{2}\right) \otimes \delta\left(\xi + t\right)
\;\; {\rm for} \; t > 0,
$$
a convex combination of two pure classical states, although
$\Psi_t^\hbar$ keeps being a vector, hence a pure quantum state.



\section{Pure States and Irreducibility}\label{sec:irreducibility}

We will now turn back to our general analysis and consider the
important relation between purity and irreducibility of certain
representations of the algebra of observables in quantum systems.
This discussion is of particular relevance for the treatment of
superselection sectors in Quantum Systems with infinitely many degrees
of freedom, as in Quantum Field Theory or Quantum Statistical
Mechanics.

It is possible to determine if a state $\omega$ on a \Cstar-algebra
$\A$ is pure or mixed by analysing the reducibility of a certain
representation of $\A$ into the space of bounded operators on a
Hilbert space suitably constructed from $\omega$ and $\A$. This is
achieved by means of the \emph{GNS construction} (named after Gelfand,
Naimark and Segal), which we sketch below.

To begin with, define:
\[
  \mathcal{N}_{\omega}=\left\{A \in \A ,\, \omega(A^{\ast}A)=0\right\}.
\]
$\mathcal{N}_\omega$ is a vector subspace of $\A$, as it may be
verified by the Cauchy-Schwartz inequality (see {\em e.g.},
\cite{bratteli}):
\begin{equation}\label{eq:cauchy_schwartz}
\left|\omega\left(A^* B)\right)\right|^2 \leqslant \omega(A^*A)\omega(B^*B),
\end{equation}
valid for any $A,B \in \A$ and positive linear functionals
$\omega$. Thus, one obtains a well-defined inner product
$\<\,\cdot\,,\,\cdot\,\>_\omega$ on the quotient space
$\A/\mathcal{N}_\omega$
of equivalence classes
\[
  \psi_{A}=\left\{\hat{A} \, : \, \hat{A}=A+I,\, I\in\mathcal{N}_{\omega}\right\}
  \;
\]
by posing $\<\psi_A,\psi_B\>_\omega = \omega(A^* B)$ (its independence
with respect to the classes' representatives can be verified once again
by means of the Cauchy-Schwartz inequality).

The canonical completion of $\A/\mathcal{N}_\omega$
with respect to the inner product $\<\,\cdot\,,\,\cdot\,\>_\omega$,
denoted $\H_\omega$, is called the representation space of $\A$ for
the state $\omega$.

Now, let us remark that $\mathcal{N}_\omega$ is also a left ideal of
$\A$\footnote{\label{ft:formula} For any $A,B \in \A$, one has
  $\omega((AB)^* (AB)) \leqslant \|A\|^2 \omega(B^*B)$ (see
  \cite{bratteli}), so clearly $AB \in \mathcal{N}_\omega$ whenever
  $B \in \mathcal{N}_\omega$.}, so the linear operator $\pi_\omega(A)$
acting on the dense subspace $\A/\mathcal{N}_\omega$
of $\H_\omega$ as $\pi_\omega(A) \psi_B = \psi_{AB}$ is well-defined;
moreover,
$$\|\pi_\omega(A)\psi_B\|^2 = \omega(B^* A^* AB) \leqslant \|A\|^2 \omega(B^*B) = \|A\|^2\|\psi_B\|^2 $$
(where we have used the same inequality as in Footnote
\ref{ft:formula}), showing that $\pi_\omega(A)$ is bounded and,
therefore, may be continuously extended to the whole
$\H_\omega$. Further, it is easy to see that, for any $A_1,\, A_2 \in \A$
and $\lambda \in \C$, we have
$\pi_{\omega}(A_{1})\pi_{\omega}(A_{2})=\pi_{\omega}(A_{1}A_{2})$,
$\pi_\omega(A_1+\lambda A_2) = \pi_\omega{A_1} + \lambda
\pi_\omega(A_2)$, and
$\pi_\omega\left(A_1^*\right) = \pi_\omega(A_1)^*$. It is sufficient
to check these claims for vectors of the form $\psi_B$ with
$B \in \A$, since they are dense in $\H_\omega$; so, given any
$B \in \A$:
$$\pi_{\omega}(A_{1})\pi_{\omega}(A_{2})\psi_B =\psi_{A_{1}A_{2}B} = \pi_\omega(A_1A_2)\psi_B,$$
now, using the usual vector space operations for the quotient space,
$$
\pi_{\omega}(A_{1} + \lambda A_2)\psi_B =\psi_{(A_{1}+ \lambda A_{2})B}
= \psi_{A_1B} + \lambda \psi_{A_2B} = (\pi_\omega(A_1) + \lambda \pi_\omega(A_2))\psi_B,
$$
and last
$$
\<\pi_\omega(B^*)\Psi_{A_1},\Psi_{A_2}\>_\omega = \omega((B^*A_1)^*A_2)
= \omega(A_1^*BA_2) = \<\pi_\omega(B)^*\Psi_{A_1},\Psi_{A_2}\>_\omega,
$$
leading us to the conclusion that
$\pi_\omega : \A \longrightarrow \L(\H_\omega)$ is actually a
representation of the \Cstar-algebra $\A$ in the Hilbert space $\H_\omega$.

Finally, put $\Omega_{\omega} = \psi_{\mathbbm{1}}$ if $\A$ has an identity\footnote{If $\A$ does not have an identity, similar results can
  be obtained using nets converging to a unity. See, {\em e.g.,} \cite{bratteli}.}.
%
Not only will we have that the set
$$\left\lbrace \, \pi_{\omega}(A)\Omega_{\omega}\,:\, A\in\A \, \right\rbrace$$
is the dense in $\H_\omega$\footnote{When this happens,
  $\Omega_\omega$ is said to be a cyclic vector.}, for
$\pi_\omega(A)\Omega_{\omega} = \psi_A$, but also that
$$
 \<\Omega_{\omega},\pi_\omega(A)\Omega_{\omega}\>_\omega
   = \<\psi_{\mathbbm{1}},\psi_{A}\>_\omega = \omega(A)
$$
for any $A \in \A$, \emph{i.e.}, within this particular representation
crafted for $\omega$, this state appears as a vector state
$\Omega_\omega$.

The triple $(\H_\omega,\pi_\omega,\Omega_\omega)$ is called the GNS
representation of the \Cstar-algebra $\A$ for the state $\omega$.

Let be $\A$ a \Cstar-algebra and $\omega$ a state on it. A triple
$(\H,\pi,\Omega)$ consisting of a Hilbert space $\H$, a representation
$\pi$ of $\A$ in $\L(\H)$ and a vector $\Omega \in \H$ such that the
set $\left\lbrace \pi(A)\Omega : A \in \A \right\rbrace$ is dense in
$\H$ and that $\<\Omega,\pi(A)\Omega\> = \omega(A)$ is called a {\em cyclic
representation} of $\A$ for the state $\omega$.

As we see, for any state $\omega$ over the \Cstar-algebra $\A$, there
exists at least one cyclic representation, the GNS one,
$\left(\H_{\omega},\pi_{\omega},\Omega_{\omega}\right)$.
Actually, it is not difficult to verify that the cyclic
representations are unique up to unitary equivalences. See, for
instance, Theorem 2.3.16 of \cite{bratteli} for details.

Yet, given that we ended up representing $\omega$ as a vector state
despite of it being mixed or not, we may ask: is there a way to
discover whether $\omega$, as a functional on $\A$, was a pure state or
a mixture by analysing its cyclic representations?

Now intervenes the notion of irreducibility.
A representation $\pi$ of an algebra $\A$ into a vector space $V$ is
said to be irreducible if there is no closed subspace $U \subset V$
left invariant by the action of $\pi$, apart from the trivial spaces
$U = \{0\}$ and $U = V$. Said otherwise, $\pi$ is irreducible if,
should $\pi(A)U \subset U$ for $U \subset V$ closed and every
$A \in \A$, then either $U = \{0\}$ or $U = V$. If $\pi$ is not irreducible,
it is said to be reducible.

The following theorem is of central importance for this discussion.
\begin{theorem}\label{th:reducibility}
  Let $\A$ be a \Cstar-algebra and $\omega$ a state on it. Then,
  $\omega$ is pure if, and only if, given a cyclic representation
  $(\H,\pi,\Omega)$ of $\A$ for $\omega$, $\pi$ is
  irreducible.
\end{theorem}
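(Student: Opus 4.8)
The plan is to prove the two implications by contraposition, in both cases translating the statement into one about the commutant $\pi(\A)' = \{T \in \L(\H) : T\pi(A) = \pi(A)T \text{ for all } A \in \A\}$, and using repeatedly the defining property of a cyclic vector, namely that $\{\pi(A)\Omega : A \in \A\}$ is dense in $\H$. The elementary observation underlying everything is that a closed subspace $U \subseteq \H$ is invariant under $\pi(\A)$ if and only if the orthogonal projection onto $U$ belongs to $\pi(\A)'$: since $\pi$ is a $*$-representation, invariance of $U$ forces invariance of $U^\perp$, hence the projection commutes with every $\pi(A)$. I would assume for simplicity that $\A$ is unital, the general case being handled by approximants of the identity exactly as in the proof of Theorem \ref{th:vector_is_pure-generalisation}.

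\emph{Reducible $\Rightarrow$ mixed.} Given a cyclic representation $(\H,\pi,\Omega)$ with a nontrivial closed invariant subspace, let $P \in \pi(\A)'$ be the corresponding orthogonal projection. Cyclicity forces $P\Omega \neq 0$ and $(\UM-P)\Omega \neq 0$ (if, say, $P\Omega = 0$, then $\pi(A)\Omega = (\UM-P)\pi(A)\Omega \in \mathrm{Ran}(\UM-P)$ for all $A$, so $\overline{\{\pi(A)\Omega : A\in\A\}}$ would lie in a proper subspace). Hence $\lambda := \|P\Omega\|^2 \in (0,1)$, and I would set $\omega_1(A) = \lambda^{-1}\langle P\Omega, \pi(A)P\Omega\rangle$ and $\omega_2(A) = (1-\lambda)^{-1}\langle(\UM-P)\Omega,\pi(A)(\UM-P)\Omega\rangle$. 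These are linear, positive and normalised, hence states; because $P$ commutes with $\pi(\A)$ the cross terms $\langle P\Omega,\pi(A)(\UM-P)\Omega\rangle$ vanish, giving $\lambda\omega_1 + (1-\lambda)\omega_2 = \omega$. Distinctness $\omega_1 \neq \omega_2$ follows because $\omega_1 = \omega_2$ would force $\omega_1 = \omega$, hence $\langle\pi(A^*)\Omega,(P-\lambda\UM)\Omega\rangle = 0$ for all $A$, hence $P\Omega = \lambda\Omega$ by cyclicity, hence $\lambda = \lambda^2$, impossible. Thus $\omega$ is mixed.

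\emph{Mixed $\Rightarrow$ reducible.} Given $\omega = \lambda\omega_1 + (1-\lambda)\omega_2$ with $\lambda \in (0,1)$ and $\omega_1 \neq \omega_2$, consider the positive functional $\rho := \lambda\omega_1 \leqslant \omega$. Since $\mathcal{N}_\omega \subseteq \mathcal{N}_\rho$ and $|\rho(A^*B)|^2 \leqslant \omega(A^*A)\omega(B^*B)$ by Cauchy-Schwartz, the form $(\psi_A,\psi_B) \mapsto \rho(A^*B)$ is well defined and bounded on the dense subspace $\A/\mathcal{N}_\omega \subset \H_\omega$, so Riesz representation yields $T \in \L(\H_\omega)$ with $\langle\psi_A, T\psi_B\rangle_\omega = \rho(A^*B)$; one checks $0 \leqslant T \leqslant \UM$, that $T \in \pi_\omega(\A)'$ (from $\langle\psi_A, T\pi_\omega(C)\psi_B\rangle_\omega = \rho(A^*CB) = \langle\psi_A,\pi_\omega(C)T\psi_B\rangle_\omega$), and that $\rho(A) = \langle\Omega_\omega, T\pi_\omega(A)\Omega_\omega\rangle_\omega$. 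If $T = c\UM$ were scalar, then $\lambda\omega_1 = c\,\omega$, forcing $c=\lambda$ and $\omega_1 = \omega = \omega_2$, a contradiction; so $T$ is a non-scalar self-adjoint element of the commutant, and choosing $c$ strictly between two points of $\mathrm{spec}(T)$, the spectral projection $E = \chi_{(-\infty,c]}(T)$ is a nontrivial projection commuting with $\pi_\omega(\A)$, whose range is a nontrivial closed invariant subspace. Hence $\pi_\omega$ — and, since cyclic representations are unique up to unitary equivalence, any cyclic representation of $\omega$ — is reducible.

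I expect the delicate point to be the second implication, specifically the construction of the operator $T \in \pi_\omega(\A)'$ from the dominated functional $\rho$ (a Radon-Nikodym-type argument) together with the verification of all its properties, and the clean extraction of a nontrivial invariant subspace from $T$ via bounded Borel functional calculus while keeping the exposition self-contained. The first implication is comparatively mechanical once one notices that the projection onto an invariant subspace automatically lies in the commutant and decomposes $\Omega$.
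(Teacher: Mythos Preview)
Your proposal is correct and follows essentially the same route as the paper's proof in Appendix \ref{sec:appendic:ProofofTheoremth:reducibilit}: both directions rest on the Radon--Nikodym-type correspondence between positive functionals dominated by $\omega$ and positive operators in the commutant $\pi(\A)'$, combined with the passage between non-scalar self-adjoint elements of the commutant and nontrivial invariant subspaces via spectral projections. The only cosmetic differences are that for reducible $\Rightarrow$ mixed you start directly from the projection onto an invariant subspace (the paper reaches a projection as a spectral projector of some non-scalar self-adjoint $S\in\pi(\A)'$), and for mixed $\Rightarrow$ reducible you exhibit the invariant subspace explicitly rather than citing Schur's Lemma.
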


A proof can be found in references like \cite{murphy,bratteli}. For
the convenience of the reader we present it in Appendix
\ref{sec:appendic:ProofofTheoremth:reducibilit}.

As we see, purity manifests itself in irreducibility of the cyclic
representation of the algebra. On the other hand, the fact that the
cyclic representation of a mixed state is reducible (and, therefore,
can be further decomposed into irreducible ones, as discussed in
Section \ref{sec:Krein-MilmanandChoquettheorems}) means that such
states can be interpreted as being built by elementary subsystems.
This result is of major importance for understanding the distinction
between pure states and mixtures.

\section{Purification of States}
\label{sec:Purification-of-States}

Given a quantum state defined in a separable Hilbert space
$\H_{\mathrm{I}}$ by a density matrix $\rho$ it is possible to find
another (not uniquely defined) separable Hilbert space
$\H_{\mathrm{II}}$ such that the original state can be represented as
a normalised \underline{vector} state $\Psi_\rho$ in the tensor
product space $\H_{\mathrm{I}}\otimes \H_{\mathrm{II}}$. Therefore,
the vector state defined by $\Psi_{\rho}$ is a pure state in the
enlarged algebra $\L\big(\H_{\mathrm{I}}\otimes\H_{\mathrm{II}}\big)$
but, except when $\rho$ is a one dimensional projection, it is not a
pure state for the original algebra
$\L\big(\H_{\mathrm{I}}\big)\simeq \L\big(\H_{\mathrm{I}}\big)\otimes
\UM_{\H_{\mathrm{II}}}$. The vector
$\Psi_\rho\in\H_{\mathrm{I}}\otimes \H_{\mathrm{II}}$ is called a {\em
  purification} of the state defined by $\rho$.

These facts, which we are going to establish bellow, have an interesting
physical interpretation, since they say that a given mixed state of a
quantum system can always be thought of as a pure state of a larger
quantum system.

The vector $\Psi_\rho$ is not to be confused with the GNS vector
presented above, since the vector $\Psi_\rho$ is not cyclic and
separating for the original algebra
$\L\big(\H_{\mathrm{I}}\big)\simeq\L\big(\H_{\mathrm{I}}\big)\otimes\UM_{\mathrm{II}}$
and since the algebra
$\L\big(\H_{\mathrm{I}}\otimes\H_{\mathrm{II}}\big)$ is indeed
``larger'' than the ``original'' algebra
$\L\big(\H_{\mathrm{I}}\big)\otimes\UM_{\mathrm{II}}$.

Let $\H_{\mathrm{I}}$ be a separable Hilbert space and let $\rho$
be a density matrix acting on $\H_{\mathrm{I}}$, {\em i.e.},
a bounded, trace class, self-adjoint and positive operator acting on
$\H_{\mathrm{I}}$ with $\Tr_{\H_{\mathrm{I}}}(\rho)=1$.
The expression $\L(\H_{\mathrm{I}})\ni D\mapsto
\Tr_{\H_{\mathrm{I}}}(\rho D)\in\C$ represents a state on the \Cstar-algebra
$\L(\H_{\mathrm{I}})$.

Consider any other separable Hilbert space $\H_{\mathrm{II}}$ with
$\mathrm{dim}(\H_{\mathrm{II}})\geq \mathrm{dim}(\H_{\mathrm{I}})$. We
can find a normalised vector
$\Psi_\rho\in \H := \H_{\mathrm{I}}\otimes \H_{\mathrm{II}}$ such that
\begin{equation}
\Tr_{\H_{\mathrm{I}}}(\rho D)
\; = \;
\biglan \Psi_\rho, \; (D \otimes \UM_{\H_{\mathrm{II}}}) \Psi_\rho \bigran_{\H}
\; = \;
\Tr_{\H}\big( P_{\Psi_\rho} (D \otimes \UM_{\H_{\mathrm{II}}}) \big)
\;,
\label{eq:FormulaDePurificacao-caso-infinito}
\end{equation}
where $P_{\Psi_\rho}$ is the projector on the subspace generated by $\Psi_\rho$.

Let $\varrho_k, \; k\in\N$ be the eigenvalues of $\rho$ (including
multiplicity), with $\varrho_k\geq 0$ for all $k$, and let 
$\mathbf{v}_k, \; k\in\N$ be the corresponding normalised eigenvectors, building a
complete orthonormal basis in $\H_{\mathrm{I}}$:
$\biglan\mathbf{v}_i, \; \mathbf{v}_j
\bigran_{\H_{\mathrm{I}}}=\delta_{i,j}$.  The spectral decomposition
of $\rho$ is
$\rho = \sum_{i=1}^\infty \varrho_i P_{\mathbf{v}_i}$, where
$P_{\mathbf{v}_i}$ is the orthogonal projector on the one dimensional
subspace generated by $\mathbf{v}_i$.

Let $\{\mathbf{w}_l, \; l\in\N\}$ be an arbitrary orthonormal set of vectors
(not necessarily complete)
in $\H_{\mathrm{II}}$ ({\em i.e.}, with
$\biglan\mathbf{w}_i, \; \mathbf{w}_j
\bigran_{\H_{\mathrm{II}}}=\delta_{i,j}$). Define
$\Psi_\rho\in \H$ by
\begin{equation}
 \Psi_\rho \; := \; \sum_{i=1}^\infty \sqrt{\varrho_i}  \big( \mathbf{v}_i\otimes\mathbf{w}_i\big)
\; .
 \label{eq:FormuladoVetorPurificador-caso-infinito}
\end{equation}
The sequence in the r.h.s.\  converges in $\H$, because
the vectors $\mathbf{v}_i\otimes\mathbf{w}_i$,  $i\in\N$, are
orthonormal and because $\{\sqrt{\varrho_i}, \; i\in\N\}$ is a square
summable numerical sequence, since
$\sum_{i=1}^\infty\varrho_i=\Tr_{\H_{\mathrm{I}}}(\rho)=1$.
We have,
\begin{multline*}
  \biglan \Psi_\rho, \; (D \otimes \UM_{\H_{\mathrm{II}}}) \Psi_\rho \bigran_{\H}
  \; = \;
  \sum_{i=1}^\infty\sum_{j=1}^\infty\sqrt{\varrho_i} \sqrt{\varrho_j} 
  \Biglan \mathbf{v}_i\otimes\mathbf{w}_i , \;
  \big(D\mathbf{v}_j\big)\otimes\mathbf{w}_j\Bigran_{\H}
  \\
 =\; \sum_{i=1}^\infty\sum_{j=1}^\infty\sqrt{\varrho_i} \sqrt{\varrho_j}  
  \biglan \mathbf{v}_i , \;  D\mathbf{v}_j\bigran_{\H_{\mathrm{I}}}
  \underbrace{\biglan \mathbf{w}_i , \;
         \mathbf{w}_j\bigran_{\H_{\mathrm{II}}}}_{\delta_{i,j}}
  \; = \;
  \sum_{i=1}^\infty \varrho_i \biglan \mathbf{v}_i , \;
  D\mathbf{v}_i \bigran_{\H_{\mathrm{I}}}
  \;  = \; \Tr_{\H_{\mathrm{I}}} (\rho D) \;,
\end{multline*} 
establishing \eqref{eq:FormulaDePurificacao-caso-infinito}. For
$D=\UM_{\H_{\mathrm{I}}}$, in particular, this relation shows that
$\big\|\psi_\rho\big\|_{\H}=1$.

The vector $\Psi_\rho$ defined in
\eqref{eq:FormuladoVetorPurificador-caso-infinito} depends on 
$\rho$ (through the eigenvalues $\varrho_i$ and the eigenvectors
$\mathbf{v}_i$) and on the arbitrary choice of the orthonormal vectors
$\mathbf{w}_j$ of $\H_{\mathrm{II}}$. The l.h.s.\ of
\eqref{eq:FormulaDePurificacao-caso-infinito}, however, does not depend
on the choice of the $\mathbf{w}_j$'s.

It is relevant to notice that, except when $\rho$ is a one
dimensional projection, the vector state defined by $\Psi_\rho$
is not a pure state for the algebra
$\L\big(\H_{\mathrm{I}}\big)\otimes\UM_{\mathrm{II}}\simeq\L\big(\H_{\mathrm{I}}\big)$.
It is clear from 
\eqref{eq:FormulaDePurificacao-caso-infinito} that, for two
density matrices $\rho$ e $\rho'$ and for $\lambda\in[0, \; 1]$, one has
\begin{multline*}
\Biglan \Psi_{\lambda\rho+(1-\lambda)\rho'}, \;
  (D \otimes \UM_{\H_{\mathrm{II}}})
   \Psi_{\lambda\rho+(1-\lambda)\rho'} \Bigran_{\H}
\\ = \; 
\lambda
\biglan \Psi_{\rho}, \; (D \otimes \UM_{\H_{\mathrm{II}}}) \Psi_{\rho} \bigran_{\H}
+ (1-\lambda)
\biglan \Psi_{\rho'}, \; (D \otimes \UM_{\H_{\mathrm{II}}}) \Psi_{\rho'} \bigran_{\H}
\;.
\end{multline*}
From this, it also follows that
$$
\Biglan \Psi_\rho, \; (D \otimes \UM_{\H_{\mathrm{II}}}) \Psi_\rho \Bigran_{\H}
 \; = \; 
\sum_{i=1}^\infty
\varrho_i
\biglan \Psi_{P_{\mathbf{v}_i}}, \; (D \otimes \UM_{\H_{\mathrm{II}}}) \Psi_{P_{\mathbf{v}_i}} \bigran_{\H}
\;,
$$
since $\Psi_{P_{\mathbf{v}_i}}=\mathbf{v}_i\otimes\mathbf{w}_i$.
Hence, except when $\rho$ is a one dimensional projection, the vector
state defined by $\Psi_{\rho}$ is not a pure state for the algebra
$\L\big(\H_{\mathrm{I}}\big)\simeq \L\big(\H_{\mathrm{I}}\big)\otimes
\UM_{\H_{\mathrm{II}}}$.

One can easily show that
$ \rho = \PTr_{\H_\mathrm{II}}(P_{\Psi_\rho})$, where
$\PTr_{\H_\mathrm{II}}$ denotes the partial trace with respect to the
Hilbert space $\H_\mathrm{II}$.

A relevant question is whether a physical process (through a
completely positive map) can be identified leading to one of the
purifications associated to a given state. Such a process is known as
{\em physical purification} and we refer the reader to
\cite{KleinmannKampermannMeyerBruss} for further discussions on this
issue.


\section{Normal States and Density Matrices}\label{sec:normal_states}

We now come to the important relation between purity and normality of
states. In spite of being mathematically a more  technical discussion,
it is of central importance to physics due to its relation to the
notion of density matrices and to other issues.

The underlying question relevant to physics is: under which
circumstances can a state $\omega$ be defined by a density matrix,
{\em i.e.}, can be written in the form
$ \omega(A) = {\Tr}\,(\rho A) $?

Definition \ref{def:state} gives rise to far more exotic states than
those we are used to, represented by density operators when
$\A \subset \L(\H)$ for some separable Hilbert space $\H$. In this
case, these density operators amount to an important part of the set
of all states: as a corollary of Theorem \ref{th:normal_is_dense}, the
density operators are dense in the whole set of states with respect to
the weak operator topology, also known as physical topology (details
ahead). As a result, we are led to looking for further
characterisations regarding states in order to better understand them.

In particular, we will see that density operators are the realisation
of \emph{normal states} in the case where $\A \subset \L(\H)$.

Recall that an increasing net of operators is a net
$\left(A_\nu\right)_{\nu \in I} \subset \A$\footnote{$I$ is the
  \emph{index set}, the set where the indices take their values from.}
such that $A_\nu \geqslant A_\mu$ whenever $\nu \geqslant \mu$; the
operators' order relation is defined in the following way:
$A_\nu \geqslant A_\mu$ if $A_\nu - A_\mu$ is a positive operator,
\emph{i.e.}, if its spectrum is included in $\R^+$.

\begin{definition}\label{def:normal_state}
  A state (or any positive functional) $\omega$ on a \Cstar-algebra
  $\A$ is said to be normal if
$$\omega\left(\sup_{\nu \in I} A_\nu\right)= \sup_{\nu \in I} \omega(A_\nu)$$
for any bounded increasing net of positive operators $\left(A_\nu\right)_{\nu \in I} \subset \A$.
\end{definition}

As one can see, the definition of normal states heavily depends on the
notion of operator order and seems to express some kind of ``order
continuity'' property. Indeed, one may wonder whether there exists
some topology in $\A$ for which a state being normal would merely mean
that it is continuous. A topology as such does exist, actually more
than one. Let us define them.

\begin{definition}\label{def:topologies}
  Let $\H$ be a Hilbert space. We define the following families of
  seminorms on $\L(\H)$:
\begin{itemize}
\item \textbf{Ultra-strong seminorms.} Given $(\psi_n)_{n \in \N} \subset \ell_2(\H)$:
$$\|A\|_{\psi} = \left( \sum_{n=1}^\infty \|A\psi_n\|^2 \right)^\frac{1}{2}.$$
\item \textbf{Ultra-weak seminorms.} Given $(\psi_n)_{n \in \N}, (\phi_n)_\N \subset \ell_2(\H)$:
$$\| A \|_{\psi,\phi} = \left( \sum_{n=1}^\infty \left|\ip{\psi_n}{A\phi_n}\right|^2\right)^\frac{1}{2}.$$
\end{itemize}
The topologies induced on $\L(\H)$ by these families are known,
respectively, as the ultra-strong operator topology and the 
ultra-weak operator topology.
\end{definition}

Although these two topologies are distinct, being ultra-weakly or
ultra-strongly continuous is equivalent for linear
functionals.

\begin{theorem}\label{th:characterization_normal}	
  Let $\A \subset \L(\H)$ be a \Cstar-algebra, and $\omega$ a positive functional on it. The following conditions are equivalent:
\begin{enumerate}[(i)]
\item $\omega$ is normal;
\item $\omega$ is ultra-strongly continuous;
\item $\omega$ is ultra-weakly continuous;
\item there exists $(\phi_n)_{n\in\mathbb{N}}\subset \H$ with
  $\sum_{n=1}^{\infty}\|\phi_n\|^2 <\infty$ such that
  \begin{equation}
  \omega = \sum_{n=1}^{\infty}\omega_{\phi_n}, \quad
  \text{\footnotesize (convergence in norm),}
  \label{eq:normalidade}
\end{equation}
where $\omega_{\phi_n}$ is given, for any $A \in \A$, by $\omega_{\phi_n}(A) = \<\phi_n,\, A\phi_n\>$.
\end{enumerate}
\end{theorem}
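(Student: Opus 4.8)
The plan is to prove the elementary implications $(iv)\Rightarrow(ii)$, $(iii)\Rightarrow(ii)$ and $(iv)\Rightarrow(i)$, then to obtain $(ii)\Rightarrow(iii)$ and $(ii)\Rightarrow(iv)$ simultaneously by one representation-theoretic argument, and finally $(i)\Rightarrow(iv)$; the decomposition $(iv)$ then acts as the bridge between the order-theoretic notion of normality and the topological ones, and all four statements become equivalent.

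Three of these are short. For $(iv)\Rightarrow(ii)$: if $\omega=\sum_n\omega_{\phi_n}$ with $\sum_n\|\phi_n\|^2<\infty$, then by Cauchy--Schwarz $|\omega(A)|\le\sum_n\|\phi_n\|\,\|A\phi_n\|\le\big(\sum_n\|\phi_n\|^2\big)^{1/2}\|A\|_\phi$, so $\omega$ is dominated by an ultra-strong seminorm and hence ultra-strongly continuous. For $(iii)\Rightarrow(ii)$: every ultra-weak seminorm is dominated by an ultra-strong one, since $\|A\|_{\psi,\phi}^2=\sum_n|\<\psi_n,A\phi_n\>|^2\le\sum_n\|\psi_n\|^2\|A\phi_n\|^2=\|A\|_\chi^2$ with $\chi_n:=\|\psi_n\|\,\phi_n\in\ell_2(\H)$; thus the ultra-strong topology is finer than the ultra-weak one and ultra-weak continuity forces ultra-strong continuity. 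For $(iv)\Rightarrow(i)$: given a bounded increasing net $A_\nu\uparrow A=\sup_\nu A_\nu$, the net converges strongly to $A$, so $\<\phi_n,A_\nu\phi_n\>\uparrow\<\phi_n,A\phi_n\>$ and each $\omega_{\phi_n}$ is normal; as the numbers $\omega_{\phi_n}(A_\nu)$ are nonnegative and nondecreasing in $\nu$, one interchanges $\sum_n$ and $\sup_\nu$ by monotone convergence, getting $\omega(\sup_\nu A_\nu)=\sum_n\sup_\nu\omega_{\phi_n}(A_\nu)=\sup_\nu\omega(A_\nu)$.

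The first substantial step is $(ii)\Rightarrow(iii)$ and $(iv)$ at once. Ultra-strong continuity of the linear functional $\omega$ yields a constant $C>0$ and a sequence $\psi=(\psi_n)\in\ell_2(\H)$ with $|\omega(A)|\le C\|A\|_\psi$ for all $A$. Passing to the amplification $\widehat\H=\bigoplus_n\H$, with the ultra-weakly continuous representation $\pi(A)=\bigoplus_nA$ and the vector $\widehat\psi=(\psi_n)$, the bound becomes $\omega(A^*A)\le C^2\<\widehat\psi,\pi(A^*A)\widehat\psi\>$, i.e.\ $\omega\le C^2\,\omega_{\widehat\psi}$ as positive functionals on $\pi(\A)$. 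By the Radon--Nikodym theorem for positive functionals (a positive functional majorised by a vector state is implemented by a positive operator $T$ in the commutant; see {\em e.g.}\ \cite{bratteli}), $\omega(A)=\<\widehat\psi,\pi(A)T\widehat\psi\>=\<\widehat\eta,\pi(A)\widehat\eta\>$ with $\widehat\eta:=T^{1/2}\widehat\psi\in\widehat\H$. Writing $\widehat\eta=(\phi_n)_n$ and recalling $\pi(A)\widehat\eta=(A\phi_n)_n$, this reads $\omega(A)=\sum_n\<\phi_n,A\phi_n\>$ with $\sum_n\|\phi_n\|^2=\|\widehat\eta\|^2<\infty$ --- the series converges in norm because its partial sums are positive functionals increasing to $\omega$, so their difference, evaluated on an approximate identity, tends to zero --- which is $(iv)$; and since $\omega=\<\widehat\eta,\pi(\cdot)\widehat\eta\>$ is a vector functional on $\L(\widehat\H)$ composed with the ultra-weakly continuous amplification $\pi$, it is ultra-weakly continuous, which is $(iii)$.

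The remaining implication $(i)\Rightarrow(iv)$ is the heart of the matter, and the step I expect to be the main obstacle. The idea is to exhaust $\omega$ by vector functionals: among the vectors $\xi$ with $\omega_\xi\le\omega$ one uses Zorn's lemma to choose a family $\{\xi_i\}_{i\in I}$ maximal for the property that $\sum_{i\in F}\omega_{\xi_i}\le\omega$ for every finite $F\subset I$, and sets $\mu:=\sum_i\omega_{\xi_i}$, a positive functional with $\mu\le\omega$. The key lemma is that \emph{every nonzero normal positive functional dominates a nonzero vector functional} --- one proves it by applying a Hahn--Jordan decomposition to $\omega-\mu-\eps\,\omega_\zeta$ for a conveniently chosen vector $\zeta$ and small $\eps>0$, which produces a nonzero spectral projection on which $\omega-\mu$ majorises a multiple of a vector functional. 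Since $\omega-\mu$ is again normal and positive, were it nonzero this lemma would contradict maximality; hence $\omega=\mu=\sum_i\omega_{\xi_i}$, and $\sum_i\|\xi_i\|^2=\sum_i\omega_{\xi_i}(\UM)=\omega(\UM)<\infty$ forces $I$ countable, which is exactly $(iv)$. Two points require care and are where the genuine work lies: interpreting suprema of bounded increasing nets and the notion of normality when $\A$ is merely a \Cstar-subalgebra (one passes to the von Neumann algebra $\A''$ it generates, to which a normal $\omega$ extends), and the proof of the vector-domination lemma; for both we would refer the reader to \cite{bratteli}.
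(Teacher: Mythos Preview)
The paper does not actually prove this theorem: it simply states that ``A proof of this theorem can be found in \cite{bratteli,dixmier}.'' Your sketch is essentially the standard argument given in those references --- the amplification/Radon--Nikodym trick for $(ii)\Rightarrow(iii),(iv)$ and the Zorn/vector-domination lemma for $(i)\Rightarrow(iv)$ --- so your proposal is consistent with, and considerably more informative than, what the paper offers. You also correctly flag the one genuine subtlety the paper glosses over: Definition~\ref{def:normal_state} tacitly needs $\sup_\nu A_\nu$ to lie in $\A$, which is automatic only for von Neumann algebras, so the passage to $\A''$ you mention is indeed required to make the statement precise for a general $C^\ast$-subalgebra.
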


A proof of this theorem can be found in \cite{bratteli,dixmier}.


If $\H$ is a separable Hilbert space, it is now easy to conclude from
(\ref{eq:normalidade}) that a normal state on $\A \subset \L(\H)$ can
be represented by
\begin{equation}\label{eq:state_density_operator}
\omega(A) = {\Tr}\,(\rho A), \qquad \forall A \in \A,
\end{equation}
where
$$\rho = \sum_{n = 1}^\infty \|\phi_n\|^2 \left| \Phi_n \>\< \Phi_n \right|,$$
$(\phi_n)_{n\in\mathbb{N}}\subset\H$ being the sequence obtained in
the above theorem, and $\Phi_n = \frac{\phi_n}{\|\phi_n\|}$.

Besides, since a state is by hypothesis normalised, it follows that we
will also have
${\Tr}\, (\rho) = \sum_{n=1}^\infty \|\phi_n\|^2 = 1$, so $\rho$ is
a genuine density operator.

\begin{theorem}\label{th:normal_density_operator}
  Let $\H$ be separable. Normal states are realised by density
  operators when $\A$ is mapped into $\L(\H)$. Conversely, any states
  realised by density operator is a normal state.
\end{theorem}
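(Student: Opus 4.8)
The plan is to obtain both implications as essentially immediate consequences of Theorem \ref{th:characterization_normal}, which is precisely why the statement is placed where it is; the only real work is the bookkeeping converting between a square-summable family of vectors and an honest density operator.

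First I would treat the forward implication. Let $\omega$ be a normal state on $\A\subset\L(\H)$. By the equivalence (i)$\Leftrightarrow$(iv) of Theorem \ref{th:characterization_normal} there is a sequence $(\phi_n)_{n\in\N}\subset\H$ with $\sum_n\|\phi_n\|^2<\infty$ and $\omega=\sum_n\omega_{\phi_n}$, the convergence being in the norm of the dual of $\A$. Discarding the indices with $\phi_n=0$ and writing $\Phi_n=\phi_n/\|\phi_n\|$, set $\rho=\sum_n\|\phi_n\|^2\,|\Phi_n\rangle\langle\Phi_n|$. I would then check: the series converges in operator norm, so $\rho\in\L(\H)$; $\rho$ is positive and trace class with $\Tr\rho=\sum_n\|\phi_n\|^2$; since $\omega$ is a state, $\|\omega\|=1$, and as the dual norm is additive on positive functionals one has $\|\omega\|=\sum_n\|\phi_n\|^2$, whence $\Tr\rho=1$ and $\rho$ is a genuine density operator. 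Finally, for $A\in\A$ the bound $\sum_n\|\phi_n\|^2\,|\langle\Phi_n,A\Phi_n\rangle|\le\|A\|\,\Tr\rho<\infty$ justifies $\Tr(\rho A)=\sum_n\|\phi_n\|^2\langle\Phi_n,A\Phi_n\rangle=\sum_n\langle\phi_n,A\phi_n\rangle=\omega(A)$, which is exactly \eqref{eq:state_density_operator}. Note that $\rho$ need not lie in $\A$; it is merely a density operator on $\H$ realising $\omega$. Much of this is already spelled out in the discussion preceding the statement.

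For the converse, suppose $\omega(A)=\Tr(\rho A)$ for all $A\in\A$, with $\rho$ a density operator on $\H$. Being positive and trace class, $\rho$ has a spectral decomposition $\rho=\sum_k\varrho_k\,|e_k\rangle\langle e_k|$ with $\varrho_k>0$, $\{e_k\}$ orthonormal and $\sum_k\varrho_k=\Tr\rho=1$. Put $\phi_k=\sqrt{\varrho_k}\,e_k$, so $\sum_k\|\phi_k\|^2=1<\infty$ and $\omega_{\phi_k}(A)=\varrho_k\langle e_k,Ae_k\rangle$. For $\|A\|\le1$ one has $\bigl|\omega(A)-\sum_{k\le N}\omega_{\phi_k}(A)\bigr|=\bigl|\sum_{k>N}\varrho_k\langle e_k,Ae_k\rangle\bigr|\le\sum_{k>N}\varrho_k\to0$ uniformly in $A$, so $\omega=\sum_k\omega_{\phi_k}$ in the dual norm; hence $\omega$ satisfies condition (iv) of Theorem \ref{th:characterization_normal} and is therefore normal. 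The hard part, such as it is, amounts to the limit interchanges making $\Tr(\rho A)=\sum_k\varrho_k\langle e_k,Ae_k\rangle$ legitimate and to the trace-class and normalisation bookkeeping in the forward direction; the substantive content---that normality is equivalent to admitting a norm-convergent expansion $\omega=\sum_n\omega_{\phi_n}$---is imported wholesale from Theorem \ref{th:characterization_normal}.
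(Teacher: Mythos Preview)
Your proposal is correct and follows the same approach as the paper: both directions are reduced to the equivalence (i)$\Leftrightarrow$(iv) of Theorem \ref{th:characterization_normal}, with the density operator built from the square-summable family $(\phi_n)$ exactly as in the discussion preceding the statement. The paper's own proof is in fact terser than yours---it simply cites Theorem \ref{th:algebra_isomorphism}, Theorem \ref{th:characterization_normal} and ``the subsequent discussion''---so your explicit bookkeeping for the converse (via the spectral decomposition of $\rho$ and the norm-convergence estimate) only fills in details the paper leaves implicit.
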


\begin{proof}
  From Theorem \ref{th:algebra_isomorphism} it is known that $\A$ may
  be isomorphically mapped onto a closed subalgebra
  $\tilde{\A} \subset \L(\H)$, so the states on $\A$ may be mapped
  onto the states on $\tilde{\A}$. By Theorem
  \ref{th:characterization_normal} and the subsequent discussion, the
  proof is complete.
\end{proof}

Now, this section's central result:

\begin{theorem}\label{th:normal_is_dense}
  Normal states are dense in the set of all states according to the the weak (physical)
  topology, \emph{i.e.}, given any state $\omega$ on a \Cstar-algebra
  $\A$, there exists a sequence of normal states
  $(\omega_n)_{n \in \N} \subset \A$ such that, for any $A \in \A$,
  $\omega(A) = \lim_n \omega_n (A)$.
\end{theorem}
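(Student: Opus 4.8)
The plan is to combine the Hahn--Banach separation theorem with the elementary observation that vector states already ``see'' the whole spectrum of every self-adjoint observable. By Theorem~\ref{th:algebra_isomorphism} we may assume $\A\subset\L(\H)$ for some Hilbert space $\H$; restricting to the essential subspace if necessary we may further assume that $\A$ acts non-degenerately, so that if $\A$ has a unit then $\UM_\A=\UM_{\L(\H)}$ (the general, non-unital case being handled through approximate units exactly as in Remark~\ref{rem:ppc}; for clarity I carry out the argument assuming a unit). Let $S(\A)$ denote the state space and $S_n(\A)\subseteq S(\A)$ the set of normal states. By Theorem~\ref{th:characterization_normal}(iv) every vector state $\omega_\xi(A)=\ip{\xi}{A\xi}$, $\|\xi\|=1$, is normal, and hence so is every finite convex combination $\sum_{i=1}^m p_i\,\omega_{\xi_i}$ (take $\phi_i=\sqrt{p_i}\,\xi_i$); thus $S_n(\A)$ is convex.

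Suppose, for contradiction, that some $\omega\in S(\A)$ does not lie in the weak (physical) closure $K$ of $S_n(\A)$. The set $K$ is convex and weakly closed, and the continuous linear functionals on $\A^\ast$ equipped with the weak topology $\sigma(\A^\ast,\A)$ are precisely the evaluations at elements of $\A$; so the Hahn--Banach separation theorem yields $B\in\A$ and reals $a<b$ with $\mathrm{Re}\,\psi(B)\le a<b\le\mathrm{Re}\,\omega(B)$ for all $\psi\in S_n(\A)$. Since every state is $\ast$-preserving, $\mathrm{Re}\,\psi(B)=\psi\big(\tfrac12(B+B^\ast)\big)$, so writing $A:=\tfrac12(B+B^\ast)=A^\ast\in\A$ and $t:=a$ we obtain a self-adjoint $A$ with $\psi(A)\le t<\omega(A)$ for every $\psi\in S_n(\A)$.

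The heart of the matter is the chain
\[
\sup_{\psi\in S_n(\A)}\psi(A)\;=\;\sup_{\|\xi\|=1}\ip{\xi}{A\xi}\;=\;\max\mathrm{spec}_{\L(\H)}(A)\;=\;\max\mathrm{spec}_{\A}(A)\;=\;\sup_{\psi\in S(\A)}\psi(A).
\]
For ``$\le$'' in the first equality: if $\psi$ is normal then $\psi=\sum_n\omega_{\phi_n}$ with $\sum_n\|\phi_n\|^2=\psi(\UM)=1$ (Theorem~\ref{th:characterization_normal}(iv)), and since $A\le\big(\max\mathrm{spec}_{\L(\H)}(A)\big)\UM$ one gets $\psi(A)=\sum_n\ip{\phi_n}{A\phi_n}\le\max\mathrm{spec}_{\L(\H)}(A)$; the reverse inequality holds because the $\omega_\xi$ themselves belong to $S_n(\A)$. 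The second equality is the standard spectral formula for a bounded self-adjoint operator (take $\xi$ in the range of the spectral projection $E_A\big((\max\mathrm{spec}_{\L(\H)}(A)-\delta,\infty)\big)$, which is non-zero for every $\delta>0$ because that interval meets the spectrum). The third equality is spectral permanence for unital $C^\ast$-subalgebras, and the fourth is elementary: for a state $\psi$ on a unital $C^\ast$-algebra and $A=A^\ast$ one has $\psi\big(\max\mathrm{spec}_\A(A)\,\UM-A\big)\ge 0$, so $\psi(A)\le\max\mathrm{spec}_\A(A)$, with equality attained by extending (Hahn--Banach together with Remark~\ref{rem:ppc}) the point-evaluation state at $\max\mathrm{spec}_\A(A)$ on the commutative subalgebra $C^\ast(A,\UM)\cong C(\mathrm{spec}_\A(A))$. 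Combining the chain with the separation estimate gives $\omega(A)\le\sup_{\psi\in S(\A)}\psi(A)=\sup_{\psi\in S_n(\A)}\psi(A)\le t<\omega(A)$, a contradiction; hence $S_n(\A)$ is weakly dense in $S(\A)$.

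Unwinding the density statement, for every finite family $A_1,\dots,A_k\in\A$ and every $\eps>0$ there is a normal state $\tilde\omega$ with $|\tilde\omega(A_j)-\omega(A_j)|<\eps$ for all $j$, which is exactly what weak (physical) approximation of $\omega$ by normal states means; when $\A$ is separable the weak topology on $S(\A)$ is metrizable and one may extract an honest sequence $(\omega_n)$. (Self-contained treatments of this result appear in \cite{bratteli,dixmier}.) The step I expect to be the genuine obstacle is the chain of equalities above---specifically the interplay of spectral permanence with the identity $\max\mathrm{spec}(A)=\sup_\psi\psi(A)$---together with the routine but slightly delicate bookkeeping required to run the whole argument when $\A$ lacks a unit.
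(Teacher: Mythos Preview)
Your argument is correct and self-contained, whereas the paper's entire proof is the one-line citation ``Theorem 1.1 in \cite{fell}'', so there is no in-paper approach to compare against. What you have written is essentially the standard Hahn--Banach/spectral-permanence argument underlying Fell's result: separate the putative non-approximable state from the weak-$*$ closure of the normal states by a self-adjoint $A\in\A$, then observe that $\sup_{\psi\text{ normal}}\psi(A)=\max\mathrm{spec}_\A(A)=\sup_{\psi\text{ state}}\psi(A)$ via spectral permanence for unital $C^\ast$-subalgebras, which collapses the separation inequality. Your worry that this chain of equalities is the ``genuine obstacle'' is unwarranted---each link is a textbook fact and you justify them adequately. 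One minor point worth keeping: the paper's statement promises a \emph{sequence}, but as you rightly flag, weak-$*$ density only yields a net unless $\A$ is separable (so that the weak-$*$ topology on $S(\A)$ is metrizable); the paper is slightly imprecise here and your qualification is the correct one.
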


\begin{proof}
Theorem 1.1 in \cite{fell}.
\end{proof}

\begin{remark}
  As we saw in Section \ref{sec:physical_topology}, an actual state
  $\omega$ of a physical system may be determined by a process of
  taking limits within the physical topology. In view of the previous
  theorem, we may only consider normal states to partially describe
  $\omega$ within weak neighbourhood
  $\Omega_\eps(A_1,\, \ldots ,\, A_k;\omega)$; however, it is important to bear
  in mind that, even in this case, the limit, \emph{i.e.}, $\omega$,
  may end up being non-normal.
\end{remark}

If a state is normal, it is easier to identify it as pure or mixed:

\begin{theorem}\label{th:characterization_normal_pure_mixture}
  Let $\A$ be a \Cstar-algebra isomorphic to
  $\tilde{\A} \subset \L(\H)$ for some separable Hilbert space
  $\H$. Let $\omega$ be a normal state on $\A$ and $\rho \in \tilde{\A}$
  the density operator associated to $\omega$ (see Theorem
  \ref{th:normal_density_operator} above). Then $\rho$ has
  Hilbert-Schmidt norm
  $\|\rho\|_{HS} = \sqrt{{\Tr}(\rho^\ast\rho)} = 1$ if and only if
  $\omega$ is pure. If $\omega$ is a mixed state, $\|\rho\|_{HS} < 1$.
\end{theorem}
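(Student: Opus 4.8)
The plan is to read the purity of $\omega$ directly off the spectrum of $\rho$. Since $\rho$ is positive, trace class and $\Tr(\rho)=1$, by the discussion preceding Theorem \ref{th:normal_density_operator} it has a spectral decomposition $\rho=\sum_k\varrho_k P_k$ over its strictly positive eigenvalues $\varrho_k>0$, where the $P_k$ are the (mutually orthogonal, one–dimensional) eigenprojections and $\sum_k\varrho_k=1$. Then $\rho^\ast\rho=\rho^2=\sum_k\varrho_k^2 P_k$, so
\[
  \|\rho\|_{HS}^2\;=\;\Tr(\rho^2)\;=\;\sum_k\varrho_k^2 .
\]
First I would record the elementary fact that, since $0<\varrho_k\leqslant\sum_j\varrho_j=1$ for every $k$, one has $\varrho_k^2\leqslant\varrho_k$, hence $\|\rho\|_{HS}^2\leqslant\sum_k\varrho_k=1$, with equality if and only if $\varrho_k^2=\varrho_k$, i.e. $\varrho_k=1$, for every $k$; combined with $\sum_k\varrho_k=1$ this forces the sum to consist of a single term. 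In other words, $\|\rho\|_{HS}=1$ holds precisely when $\rho=P$ is the orthogonal projector onto a single unit vector $\mathbf v\in\H$, and $\|\rho\|_{HS}<1$ in all other cases. This bound already reduces the final sentence of the theorem to the equivalence: if $\omega$ is mixed it is not pure, so by the equivalence $\|\rho\|_{HS}\neq1$, and then $\|\rho\|_{HS}<1$ by the bound just proved.

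Next I would prove the direction $\|\rho\|_{HS}=1\Rightarrow\omega$ pure. If $\rho=P$ is the projection onto the line spanned by a unit vector $\mathbf v$, then for every $A\in\tilde\A$ one has $\omega(A)=\Tr(PA)=\biglan\mathbf v,\,A\mathbf v\bigran$, so $\omega=\omega_{\mathbf v}$ is a vector state on $\tilde\A$; moreover, by hypothesis $\rho=P\in\tilde\A$ is exactly the orthogonal projection onto the subspace generated by $\mathbf v$, so Theorem \ref{th:vector_is_pure-generalisation} applies and gives that $\omega_{\mathbf v}$ is pure. For the converse I would argue contrapositively: if $\|\rho\|_{HS}<1$ then, by the first paragraph, $\rho$ has at least two strictly positive eigenvalues; let $\varrho_1\in(0,1)$ be the largest and $P_1$ the corresponding eigenprojection. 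Setting $\rho':=(1-\varrho_1)^{-1}\sum_{k\geqslant2}\varrho_k P_k$, which is again a density operator, we obtain $\rho=\varrho_1 P_1+(1-\varrho_1)\rho'$ and hence, as functionals on $\tilde\A$,
\[
  \omega\;=\;\varrho_1\,\omega_1+(1-\varrho_1)\,\omega_2,\qquad
  \omega_1(A)=\Tr(P_1A),\quad\omega_2(A)=\Tr(\rho' A),
\]
with $\omega_1,\omega_2$ states on $\tilde\A$ and $\varrho_1\in(0,1)$. It then remains to see that $\omega_1\neq\omega_2$, so that this convex decomposition genuinely exhibits $\omega$ as a mixture; evaluating on the element $P_1$ (an eigenprojection of $\rho$) gives $\omega_1(P_1)=1$ while $\omega(P_1)=\Tr(\rho P_1)=\varrho_1<1$, so $\omega\neq\omega_1$ and the decomposition is nontrivial, whence $\omega$ is mixed. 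Combining the two implications yields the stated equivalence, and with it the claim $\|\rho\|_{HS}<1$ for mixed states.

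The step I expect to be the main obstacle is the very last one, namely ensuring that the convex decomposition of a non-extremal $\rho$ stays nontrivial \emph{as a decomposition of states on the (possibly proper) subalgebra} $\tilde\A$: this is what makes $\omega_1\neq\omega_2$ delicate, and it is where one needs, beyond $\rho\in\tilde\A$, that the spectral projections $P_k$ of $\rho$ belong to $\tilde\A$ as well (e.g. when $\tilde\A=\L(\H)$, or more generally when $\tilde\A$ is a von Neumann algebra — the natural setting here, in which one may also invoke the uniqueness of the density operator implementing a normal state from Theorem \ref{th:characterization_normal}). All the remaining ingredients — the spectral decomposition, the identity $\|\rho\|_{HS}^2=\sum_k\varrho_k^2$, the inequality $\sum\varrho_k^2\leqslant\sum\varrho_k$ with its equality case, and the reduction of the projector case to Theorem \ref{th:vector_is_pure-generalisation} — are routine.
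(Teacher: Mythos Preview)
Your proof is correct and follows essentially the same route as the paper's: spectral decomposition of $\rho$, the inequality $\sum_k\varrho_k^2\leqslant\sum_k\varrho_k=1$ with its equality case, reduction of the rank-one case to Theorem \ref{th:vector_is_pure-generalisation}, and the explicit convex splitting $\rho=\varrho_1 P_1+(1-\varrho_1)\rho'$ when two eigenvalues are positive. You are in fact slightly more careful than the paper in verifying that the resulting decomposition of $\omega$ is nontrivial, and the obstacle you flag --- that the spectral projections $P_k$ need not lie in $\tilde\A$ unless $\tilde\A$ is a von Neumann algebra --- is exactly the subtlety the paper acknowledges only in a side remark after its proof, resolving it by passing to the weak-operator closure and invoking ultra-weak continuity of normal states.
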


\begin{proof}
  Since $\rho$ is a density operator, it is positive and compact, so
  it is possible to choose a Hilbertian basis
  $(e_n)_{n \in \N} \subset \H$ and a sequence of scalars
  $\varrho_n \geqslant 0$ such that
  $\rho = \sum_{n = 1}^\infty \varrho_n \left| e_n \>\<e_n\right|$. Given
  that ${\Tr}\, (\rho) = \sum_{n = 1}^\infty \varrho_n = 1$, we must have
  $\varrho_n \leqslant 1$ for every $n \in \N$, which implies
  $\varrho_n^2 \leqslant \varrho_n$, the equality holding only in the case 
  $\varrho_n = 0$ or $\varrho_n = 1$; as a consequence, calculating its
  Hilbert-Schmidt norm:
$$\|\rho\|_{HS}^2 = {\Tr}\, (\rho^* \rho) = \sum_{n = 1}^\infty \varrho_n^2,$$
it becomes clear that $\|\rho\|_{HS}$ equals $1$ if and only if
there is exactly one $\varrho_n$ non-null, say for $n = N$. Thus, we have
$\rho = \left| e_N \> \< e_N \right|\in \A$, which means that $\omega$
is mapped to a state that acts on $A \in \tilde{\A}$ as
${\Tr}\,(\rho A) = \<e_N,\, Ae_N\>$, a vector state. By Theorem
\ref{th:vector_is_pure-generalisation}, states of this form are always
pure, implying the same for $\omega$ itself. Conversely, if
$\|\rho\|_{HS} < 1$ strictly, then we have at least two non-zero
$\varrho_n$'s; calling one of them $\varrho_N$, one has:
$$\rho = \lambda \left| e_N \> \< e_N \right| + (1 - \lambda) \rho^\prime,$$
with $0 < \lambda = \varrho_N < 1$ and
$\rho^\prime = \frac{1}{1-\varrho_N} \sum_{\underset{n \neq N}{n = 1}}^\infty \varrho_n
\left| e_n \> \< e_n \right|$. Since $\rho^\prime$ also represents a state,
for it is a density operator, we have that $\rho$ is a non-trivial
mixture, so it is not pure.

The converse affirmations, \emph{i.e.}, that being pure implies
$\|\rho\|_{HS} = 1$, and being a mixture $\|\rho\|_{HS} < 1$, are also
true, as any density operator has Hilbert-Schmidt norm within $(0,1]$.
\end{proof}

As a side remark, we would like to highlight that it is not trivial to
see that $\left| e_N \> \< e_N \right|$ and $\rho^\prime$ define
states on $\tilde{\A}$, since the spectral projections of $\rho$ have
no reason to belong to $\tilde{\A}$. Fortunately, the spectral
projections are elements of the closure of $\tilde{\A}\subset\L(\H)$
in the weak operator topology (wot), hence
$\left| e_N \> \< e_N \right|$ and $\rho^\prime$ define normal states
in the von Neumann algebra $\overline{\tilde{\A}}^{wot}$, the closure
of $\tilde{\A}$ in the weak operator topology. Finally, as seen in
Theorem \ref{th:characterization_normal}, normal operators are
(ultra-)weakly continuous and the conclusion holds.

We should emphasise that normal states may be pure or mixed. From
their characterisation given in Theorem
\ref{th:characterization_normal_pure_mixture}, normal pure states
acting on $\nalgebra \subset \L(\H)$ are density operators on $\H$ with
unitary Hilbert-Schmidt norm, and from the very proof of this result
one sees that:

\begin{theorem}\label{th:normal_and_pure_vector}
Any normal pure state on the \Cstar-algebra $\L(\H)$ is a vector state.
\end{theorem}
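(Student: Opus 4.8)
The plan is to read the result off from the analysis already carried out for Theorem~\ref{th:characterization_normal_pure_mixture}, specialised to the full algebra $\L(\H)$. First I would take a normal pure state $\omega$ on $\L(\H)$ and apply Theorem~\ref{th:characterization_normal}, condition~(iv): there is a sequence $(\phi_n)_{n\in\N}\subset\H$ with $\sum_n\|\phi_n\|^2=1$ such that $\omega=\sum_n\omega_{\phi_n}$ in norm, equivalently $\omega(A)=\Tr(\rho A)$ with $\rho=\sum_n\|\phi_n\|^2\,|\Phi_n\rangle\langle\Phi_n|$ and $\Phi_n=\phi_n/\|\phi_n\|$. This $\rho$ is positive and trace class, hence compact, so it possesses a countable spectral decomposition $\rho=\sum_n\varrho_n\,|e_n\rangle\langle e_n|$ with $\varrho_n\geq 0$, $\sum_n\varrho_n=1$, and mutually orthonormal $e_n$; this holds even when $\H$ is not separable, since the range of $\rho$ lies in the separable closed span of the $\phi_n$.

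Next I would rerun the Hilbert--Schmidt argument from the proof of Theorem~\ref{th:characterization_normal_pure_mixture}. Since $0\le\varrho_n\le 1$, we have $\varrho_n^2\le\varrho_n$, so $\|\rho\|_{HS}^2=\Tr(\rho^\ast\rho)=\sum_n\varrho_n^2\le\sum_n\varrho_n=1$, with equality exactly when a single eigenvalue $\varrho_N$ equals $1$ and all others vanish. If two or more $\varrho_n$ were nonzero, then writing $\rho=\varrho_N\,|e_N\rangle\langle e_N|+(1-\varrho_N)\rho'$ with $\rho'=\frac{1}{1-\varrho_N}\sum_{n\neq N}\varrho_n\,|e_n\rangle\langle e_n|$ a genuine density operator exhibits $\omega$ as the nontrivial convex combination $\varrho_N\,\omega_{|e_N\rangle\langle e_N|}+(1-\varrho_N)\,\omega_{\rho'}$ of two distinct (normal) states on $\L(\H)$, contradicting purity. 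Hence exactly one $\varrho_N$ is nonzero and equals $1$, \emph{i.e.}\ $\rho=|e_N\rangle\langle e_N|$.

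Finally, $\omega(A)=\Tr\big(|e_N\rangle\langle e_N|\,A\big)=\langle e_N,\,Ae_N\rangle$ for every $A\in\L(\H)$, so $\omega=\omega_{e_N}$ is the vector state associated with the unit vector $e_N$, which is exactly the claim. I do not expect any genuine obstacle: the only mildly delicate point is the bookkeeping in the non-separable case, but the normal-state representation confines everything to a separable reducing subspace, so the countable spectral decomposition and all the estimates above go through verbatim. In short, the statement is a direct corollary of Theorems~\ref{th:characterization_normal} and~\ref{th:characterization_normal_pure_mixture}.
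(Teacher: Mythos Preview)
Your proposal is correct and follows essentially the same route as the paper: the theorem is stated there as an immediate consequence of the proof of Theorem~\ref{th:characterization_normal_pure_mixture}, where one sees that a normal pure state has density operator $\rho=|e_N\rangle\langle e_N|$ and hence acts as the vector state $\omega_{e_N}$. Your extra care with the non-separable case is a harmless refinement; the paper tacitly works under the separability hypothesis introduced before Theorem~\ref{th:normal_density_operator}.
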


This last result justifies a very common statement, found in many
Quantum Mechanics textbooks, that pure states are those whose Hilbert-Schmidt
norm equals $1$, and mixtures those with $\|\rho\|_{HS} < 1$. This is
only correct for normal states, which tells us not the whole picture,
as we will just see in next section.






\section{More Issues About Purity}\label{sec:moreissuesaboutpurity}

In this section we discuss some important issues and examples
concerning the relation between pure and vector states.

\subsection{A pure state that is not a vector state}\label{sec:example_not_vector}
Here we will extend the study in Section \ref{sec:pure_and_vector} by
exhibiting an example of a pure state that is not a vector state. For
simplicity, let us suppose that $\H$ is a separable
infinite-dimensional Hilbert space with orthonormal basis
$\{e_n\}_{n\in\N }$. Let $(a_n)_{n\in\N }\subset [0,1)$ be a sequence
such that $a_n \underset{n \rightarrow \infty}{\longrightarrow} 1$, for
instance: $a_n=2^{-\frac{1}{n}}$.

Now, define an operator $A \in \L(\H)$ acting on a vector $\Psi = \sum_{n = 1}^\infty \psi_n e_n \in \H$ as:
$$A\Psi = \sum_{n\in \N} a_n \, \psi_n e_n.$$
Clearly, $\|A\Psi\| < \| \Psi \|$ for any $\Psi \in \H$, and together
with $\|Ae_n\| \longrightarrow 1$, we obtain $\|A\| = 1$. Finally,
Theorem 5.1.11 in \cite{murphy} states that there exists a pure state
$\omega$ on $\L(\H)$ such that $\omega(A) = \|A\| = 1$; we claim that
this pure state cannot be vector. In fact, if it were the case, we
would have, for some $\phi \in \H$ with $\|\phi\| = 1$:
$$1 = \omega(A) = \<\phi,\, A\phi\> \leqslant \|A\phi\| < 1,$$
which is an absurd.


\subsection{All pure states on $\K(\H)$ are vector states}
\label{sec:example_pure_is_vector}
The present example (inspired in Section 5.1.1 of \cite{murphy})
contrasts with the previous one. Above, we have shown that there may
be in general pure states which are not vector; here we will see the
opposite, \emph{i.e.}, a special case where all pure states are also vector,
stressing the importance of the particular algebra that we take for
observables.

To begin with, noting by $\K(\H)$ the algebra of compact operators on
a separable Hilbert space $\H$, it is known that its dual is composed
by the set of trace class operators acting on $\H$, in symbols:
$\K(\H) = \L_1(\H)$. For the reader's convenience, let us quickly
proof this fact by remarking that the linear function
$\L_1(\H) \ni A \longmapsto {\Tr}_A \in \K(\H)^*$, where
${\Tr}_A (K) = {\Tr}\ (AK)$ for any $K \in \K(\H)$, is an
isometric isomorphism.

Indeed, taking an element $\omega \in \K(\H)^*$, Riesz's
representation theorem implies that there is a bounded operator
$A_\omega \in \L(\H)$ such that the sesquilinear form
$$\H \times \H \ni (x,y) \longmapsto \omega\left( \left| x \>\< y \right| \right) \in \C$$
can be written as
$\omega\left( \left| x \>\< y \right| \right) = \<x,\, A_\omega
y\>$. $A_\omega$ is trace-class, for picking up a Hilbertian basis
$\{e_n\}_{n \in \N}$ of $\H$:
$$
 {\Tr}\ (A_\omega) = \sum_{n = 1}^\infty \<e_n,\, A_\omega e_n\> 
 = \sum_{n = 1}^\infty \omega \left( \left| e_n \>\< e_n \right| \right) 
 = \sum_{n = 1}^\infty \omega \left( \mathbbm{1} \right) \leqslant \|\omega\|.
$$
Using continuity and linearity of $\omega$, denseness of the
finite-rank operators in $K(\H)$ and further remarks about the
injectivity $A \mapsto {\Tr}_A$, we obtain the desired duality.

Now, concerning a state $\omega$ on $\K(\H)$, it is easy to see that
the corresponding $A_\omega$ will be positive; since it is also
compact (as trace-class implies compact), there exists a Hilbertian
basis $\{e_n\}_{n \in \N}$ of $\H$ for which $A_\omega$ is diagonal,
\emph{i.e.} $A_\omega e_n = \lambda_n e_n$, with
$\lambda_n \geqslant 0$, and $\sum_{n = 1}^\infty \lambda_n = 1$ (this
sum comes from the normalisation of $\omega$). For $K \in \K(\H)$, we
have:
\begin{equation}\label{eq:x2}
\omega(K) = {\Tr}\ (A_\omega K) = \sum_{n = 1}^\infty \lambda_n \ip{e_n}{K e_n}.
\end{equation}
Hence, any state $\omega$ on $\K(\H)$ is a convex combination of
states like $\<e_n,Ke_n\>$; if $\omega$ is pure, then $\lambda_N=1$
for some $N \in \N$ and $\lambda_n=0$ for $n\neq N$, so we conclude
that it is also a vector state.



\section{Krein-Milman's and Choquet's theorems}
\label{sec:Krein-MilmanandChoquettheorems}

From the very beginning, we have been talking about pure states, but
until now we have not answered a crucial question: do they exist?

Notice that pure states constitute some kind of ``fundamental brick''
in the construction of states, that is, if we have a state $\omega$ we
can wonder if it is a mixed state. Then, if it is a mixed state, we
have $\omega=\lambda_1 \omega_1+\lambda_2 \omega_2$ for two distinct
states $\omega_1$ and $\omega_2$ and $\lambda_1,\lambda_2\in (0,1)$,
and we can wonder now if $\omega_1$ and $\omega_2$ are themselves
mixed states. Proceeding this way, after some steps, we write the
original state as a convex combination
$\omega=\sum_{i=1}^n{\lambda_i \omega_i}$, with $\lambda_i\in(0,1)$
and $\sum_{i=1}^n \lambda_i =1$. This procedure is very similar to the
one used to prove that a positive integer number has a prime
decomposition, but there is a very important difference because you
cannot divide positive integers forever by divisors bigger than
one. This difference creates the possibility that the process we
suggested for decomposing states never stops, leaving unanswered the
question about the very existence of pure states (apart from some
concrete examples, like the cases where $\A = \mathcal{L}(\H)$,
where it is known that vector states are pure).

Is there a way to circumvent the problem of our infinite process
appealing for topology, that means, can we assure that at least the
sequence obtained by our steps is convergent, in which case we could
write $\omega=\sum_{i=1}^\infty {\lambda_i \omega_i}$, with
$\lambda_i\in(0,1)$ and $\sum_{i=1}^\infty \lambda_i =1$?

Fortunately, it is possible to prove that there exist pure states
and, as our previous discussion suggests, that they exist in such a
number that all states can be written as limits (in a suitable
topology) of convex combination of them. This is Krein-Milman's
theorem.

In order to state Krein-Milman's theorem, we need to define what is a
face and what is an extremal point.

\begin{definition}
  Let $V$ be a topological vector space and $C\subset V$ be a
  non-empty convex subset. A non-empty closed and convex subset
  $F\subset C$ is said to be a face (or extremal) set of $C$ if, given
  $x,y \in C$ and $\lambda \in (0,1)$, the fact that
  $\lambda x +(1-\lambda)y \in F$ imply $x,y \in F$.
	
\end{definition}

Notice that a face is a set such that, if it contains any internal
point of a line segment of $C$, then it contains the whole segment. A
good intuition on this definition comes from polyhedra, which are in
fact, the origin of the name ``face''. If we think of a cube, its
squared faces are, indeed, six faces in the sense above.

Notice now that a face is again a non-empty convex set, hence we can
ask about the faces of a face. It is not difficult to notice that a
face of a face is also a face of the original set (see Lemma 2.10.5 of
\cite{megginson}). Back to our example, we can ask about the faces of
the six squares and, it is easy to verify, they are the squares' edges
(and the cube's vertices), and the edges' faces are the ending points
of the edges, \emph{i.e.}, faces without subfaces that are unitary
sets containing each of the cube's vertices. Let us give a special
name to these points.

\begin{definition}
  Let $V$ be a Hausdorff topological vector space and let $C\subset V$
  be a non-empty convex set. An extremal point of $C$ is a element
  $x\in C$ such that $\{x\}$ is a face of $C$.
	We denote ${\rm Ext} (C) = \{x \in C  \ : \ x \textrm{ is an extremal point of C}\}$.
\end{definition}

Of course the cube is a very simple instance of the general question,
but this example gives us a general idea on what is going on: the
extremal points of the cube, namely, its vertices, can be used to
obtain any other of the cube's points by taking convex combinations:
first obtaining the edges, after the faces and finally the interior of
the cube. That is, the cube is the smallest convex set containing its
vertices. We call this smallest convex set the convex hull, that is,
the convex hull of a set $A$ is the intersection of all convex subsets
of the vector space containing $A$. The convex hull of a set $A$ is
denoted by $\co{A}$. An analogous definition can be done by taking the
closed convex subsets, that is the closed convex hull, which is
denoted by $\cco{A}$.  \linebreak
\begin{minipage}{0.75 \textwidth}

\hspace{12pt}

Another interesting fact is that, even in finite dimension,
${\rm Ext}(K)$ is not necessarily closed. Consider for example the
$y$-displaced double cone in $\mathbb{R}^3$, {\small
$$C=\cco{\{(x,y,0) \in \mathbb{R}^3 \ : \ x^2+(y-1)^2=1\}\cup \{(0,0,1),(0,0,-1)\}}.$$}
Notice that $(0,0,0)$ cannot be an extremal point of $C$, since \linebreak$(0,0,0)=\frac{1}{2}(0,0,1)+\frac{1}{2}(0,0,-1)$. In fact,
{\small
$${\rm Ext}(C)=\{(x,y,0) \in \mathbb{R}^3 \ : \ x^2+(y-1)^2=1, \  x\neq0 \}\cup \{(0,0,1),(0,0,-1)\},$$}
which is not closed.
\end{minipage}
\begin{minipage}{0.25 \textwidth}
	\includegraphics[width=1\textwidth]{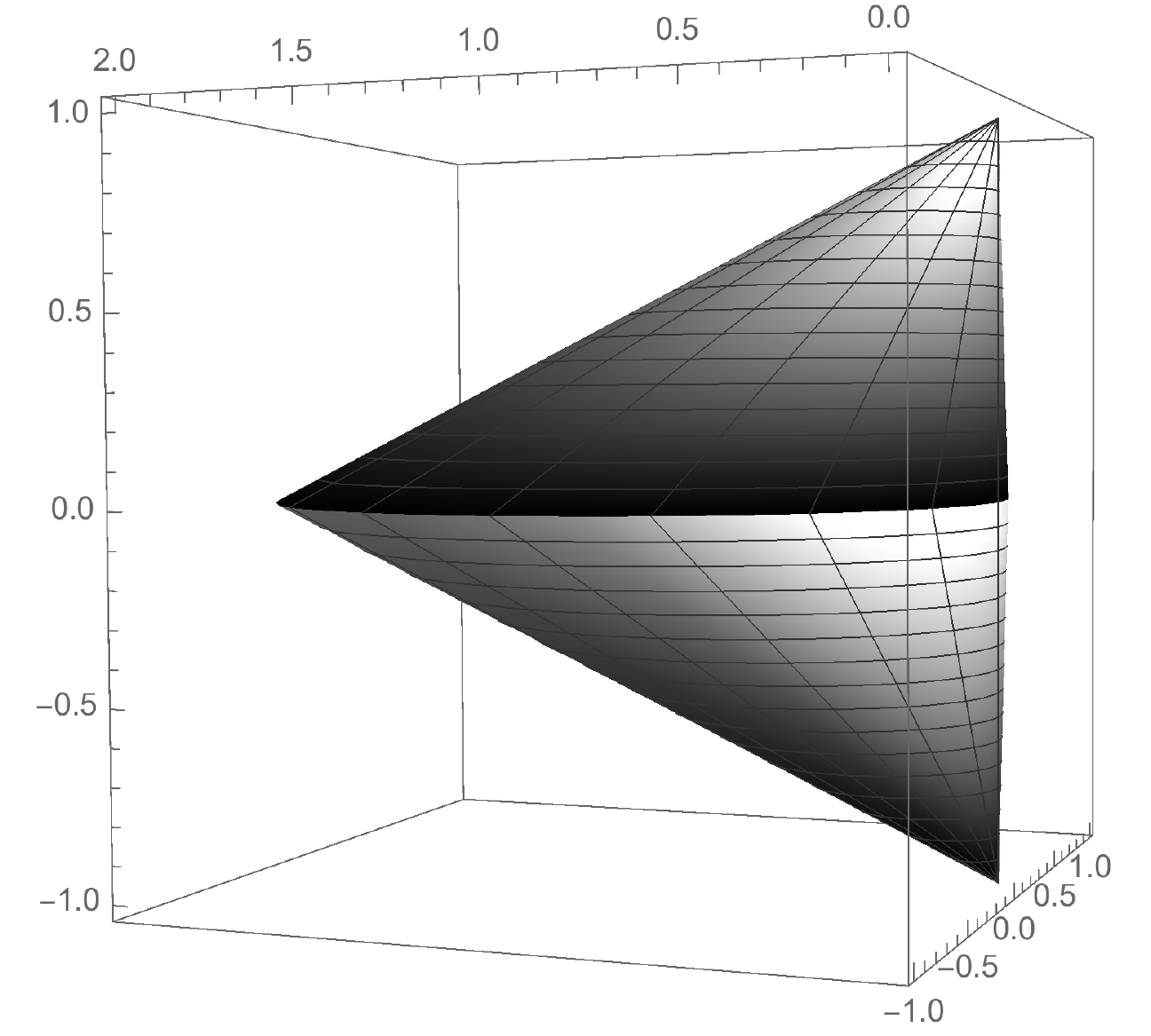} 
	\centering\footnotesize {The double cone.}
\end{minipage}
\vspace{1pt}

Finally, we are fit for stating the general result.

\begin{theorem}[Krein-Milman]
	\label{TKM}
	Let $V$ be a Hausdorff locally convex topological vector space
        and let $K\subset V$ be compact and convex. Then
        $K=\cco{{\rm Ext}(K)}$.
\end{theorem}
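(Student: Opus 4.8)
The plan is to prove the Krein--Milman theorem in the classical way, via separating hyperplanes and Zorn's lemma applied to the poset of faces. I will split the argument into three pieces: first, that $K$ has at least one extremal point; second, that $\cco{\Ext(K)} \subseteq K$; and third, the reverse inclusion $K \subseteq \cco{\Ext(K)}$, which is the heart of the matter.

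\begin{proof}
The inclusion $\cco{\Ext(K)}\subseteq K$ is immediate: $\Ext(K)\subseteq K$, $K$ is convex and closed (being compact in a Hausdorff space), so it contains the closed convex hull of any of its subsets. The work is in the reverse inclusion, and the first thing to establish is that $\Ext(K)\neq\emptyset$ when $K\neq\emptyset$.

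\emph{Existence of extremal points.} Consider the collection $\mathcal{P}$ of all (closed, convex, non-empty) faces of $K$, partially ordered by reverse inclusion. It is non-empty since $K\in\mathcal{P}$. Given a chain $\{F_i\}_{i\in J}$ in $\mathcal{P}$, the intersection $\bigcap_{i\in J}F_i$ is closed and convex; it is non-empty because each $F_i$ is a closed subset of the compact set $K$ and the family has the finite intersection property (a finite sub-chain has a smallest element, which is non-empty); and one checks directly from the definition that an intersection of faces is a face. Hence every chain has an upper bound, and by Zorn's lemma $\mathcal{P}$ has a maximal element $F_0$, i.e.\ a minimal face. I claim $F_0$ is a singleton. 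If not, pick $x\neq y$ in $F_0$. By the Hahn--Banach separation theorem applied in the locally convex space $V$, there is a continuous linear functional $\varphi$ with $\varphi(x)\neq\varphi(y)$, so $\varphi$ is non-constant on $F_0$. Since $F_0$ is compact, $\varphi$ attains its maximum on $F_0$ on a non-empty set $F_1=\{z\in F_0:\varphi(z)=\max_{F_0}\varphi\}$, which is closed and convex, and is a face of $F_0$: if $\lambda a+(1-\lambda)b\in F_1$ with $a,b\in F_0$ and $\lambda\in(0,1)$, then $\varphi(a),\varphi(b)\le\max_{F_0}\varphi$ with a convex combination equal to $\max_{F_0}\varphi$, forcing $\varphi(a)=\varphi(b)=\max_{F_0}\varphi$, so $a,b\in F_1$. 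By the transitivity of the face relation (Lemma 2.10.5 of \cite{megginson}, quoted above), $F_1$ is a face of $K$, strictly contained in $F_0$, contradicting minimality. Hence $F_0=\{x\}$ and $x\in\Ext(K)$.

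\emph{Reverse inclusion.} Set $K_0=\cco{\Ext(K)}$; this is a non-empty (by the previous step) compact convex subset of $K$. Suppose, for contradiction, that there is $x_0\in K\setminus K_0$. Since $\{x_0\}$ is compact and convex, $K_0$ is closed and convex, and they are disjoint, the Hahn--Banach separation theorem in the locally convex space $V$ yields a continuous linear functional $\varphi$ and a real number such that $\varphi(x_0)>\sup_{z\in K_0}\varphi(z)$. Let $m=\max_{z\in K}\varphi(z)$, attained since $K$ is compact, and note $m\ge\varphi(x_0)>\sup_{K_0}\varphi$. The set $G=\{z\in K:\varphi(z)=m\}$ is a non-empty compact convex face of $K$, disjoint from $K_0$ and hence containing no extremal point of $K$. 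But applying the existence argument above to the compact convex set $G$ in place of $K$ produces an extremal point of $G$, which (again by transitivity of the face relation, since $G$ is a face of $K$) is an extremal point of $K$. This contradicts $G\cap\Ext(K)=\emptyset$. Therefore $K\setminus K_0=\emptyset$, i.e.\ $K=\cco{\Ext(K)}$.
\end{proof}

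The main obstacle is the bookkeeping around faces: one must verify carefully that intersections of faces are faces (so Zorn's lemma applies), that level sets of a linear functional at its maximum are faces, and that ``a face of a face is a face'' (already granted here by the quoted lemma), so that an extremal point produced inside a sub-face $G$ is genuinely extremal in $K$. The only serious analytic input is the Hahn--Banach separation theorem for a point and a disjoint closed convex set in a locally convex space, which is where local convexity of $V$ is used; compactness is used precisely to guarantee that continuous linear functionals attain their maxima and that the relevant chains of closed sets have non-empty intersection.
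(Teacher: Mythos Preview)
Your proof is correct and is the standard argument for Krein--Milman via Zorn's lemma on the poset of faces together with Hahn--Banach separation. Note, however, that the paper does not actually give a proof of this theorem: it is stated as a classical result and only commented upon informally, with the remark that ``compactness plays a central role\ldots\ to ensure that this process of taking faces of faces does end and does not end up in an empty set''. Your existence argument (the finite intersection property of closed faces inside the compact $K$, guaranteeing that chains of faces have non-empty intersection) is precisely the formalisation of that remark, so your approach is entirely in line with what the paper gestures at. One minor point: your write-up uses the macro \verb|\Ext|, which the paper does not define; the paper writes ${\rm Ext}(K)$ instead.
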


It is important to reinforce that compactness plays a central role in
the proof of Krein-Milman's theorem. In fact, we need compactness to
ensure that this process of taking faces of faces does end and does
not end up in an empty set, which is similar to the issue described in
this section's second paragraph.

Now that we have Krein-Milman's theorem in hand, we can clarify our
previous claim on the existence of pure states. First of all, the
closed unit ball of any infinity dimensional Banach space is not
compact in the norm topology, but the closed unit ball in the dual of
a normed vector space is compact in the weak-$\ast$ topology, thanks
to the Banach-Alaoglu's theorem (see Theorem 2.6.18 of
\cite{megginson}). The conclusion is that the states of any
\Cstar-algebra $\A\subset \mathcal{L}(\H)$ constitute a weak-$\ast$
closed subset $\mathcal{S}$ of the weak-$\ast$ compact set of all
bounded and normalised linear functionals on $\A$. Hence,
$\mathcal{S}$ is weak-$\ast$ compact itself and, by Krein-Milman's
theorem, $\mathcal{S}=\cco{{\rm Ext}\left(\mathcal{S}\right)}$. Since
the extremal points of $\mathcal{S}$ are the states that do not lay in
the interior of any segment line of $\mathcal{S}$, what is a way to
say that they are not convex combinations of other states,
${\rm Ext}\left (\mathcal{S}\right)$ is the set of all pure states.

There are several interesting consequences of Krein-Milman's theorem,
among them, it can be used to prove that given $A\in\A$, there is a
pure state $\omega_A$ such that $\omega_A(A)=\|A\|$. This result is
used in the construction presented in Section
\ref{sec:example_not_vector}.

Let us return to Krein-Milman's theorem. Suppose $V$ is a Hausdorff
locally convex topological vector space and $K\subset V$ is compact
and convex. Notice now that, for every $x \in \co{{\rm Ext}(K)}$,
there exists $n\in \mathbb{N}$, $\{\lambda_i\}_{i=1}^n\subset [0, \; 1]$, with $\sum_{i=1}^n\lambda_i=1$, and
$\{x_i\}_{i=1}^n\subset {\rm Ext}(K)$ such that $x=\sum_{i=1}^n \lambda_i x_i$. Then, for
every continuous linear functional $f$ on $V$,
$f(x)=f(\sum_{i=1}^n \lambda_i x_i)=\sum_{i=1}^n \lambda_i
f(x_i)$. When ${\rm Ext}(K)$ is finite, we could also write
$f(x)=\sum_{e\in {\rm Ext}(K)} \lambda_e f(e)$, for every
$f\in V^\ast$, with $\lambda_e\in[0, \; 1]$. Notice that the different $\lambda_e$'s work as weights
in the sum, with $\sum_{e\in {\rm Ext}(K)}\lambda_e=1$. We can define
the probability measures (Dirac measures) $\delta_e:K \to \mathbb{R}$
by
$$
  \delta_e(X)=\begin{cases} 1, & \textrm{ if } e\in X, \\ 0, & \textrm{ if } e\notin X,\end{cases}
$$
for each $e\in {\rm Ext}(K)$ and
$\mu=\sum_{e\in {\rm Ex}(K)} \lambda_e \delta_e$. It is quite easy to
check that $f(x)=\int_K f d\mu$.

Notice that the measure $\mu$ is a regular Borel measure satisfying
$\mu(K)=1$, in other words, $\mu$ is a probability measure. Further,
$\mu\big(K\setminus {\rm Ext}(K)\big)=0$. A measure satisfying this is said to
be supported in ${\rm Ext}(K)$. In addition, we have that
$f(x)=\int_K f d\mu$ for all continuous linear functional on $V$. A
measure satisfying this property is said to represent $x\in K$

This is not an isolated case, and its generalisation is given by: 
\begin{theorem}[Choquet]
	\label{choquet}
        Let $V$ be a Hausdorff locally convex space and $K\in V$ a
        metrizable convex compact set. Then, for every $x\in K$, there
        exists a probability measure $\mu$ supported in ${\rm Ext}(K)$
        representing $x$.
\end{theorem}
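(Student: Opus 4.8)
The plan is to deduce Choquet's theorem from Krein--Milman together with the metrizability hypothesis, which is exactly what lets us replace a net-based representing measure by a genuine Borel probability measure supported on $\mathrm{Ext}(K)$. First I would recall the standard framework: fix $K\subset V$ metrizable, convex and compact, and consider the real Banach space $C(K)$ of continuous real functions on $K$, together with the closed subspace $A(K)\subset C(K)$ of continuous \emph{affine} functions on $K$. Every point $x\in K$ determines the evaluation functional $f\mapsto f(x)$ on $C(K)$, and by the Riesz--Markov representation theorem (cf.\ \cite{rudin}) a regular Borel probability measure $\mu$ on $K$ represents $x$ precisely when $\int_K f\,d\mu = f(x)$ for all $f\in A(K)$; the point-mass $\delta_x$ is one such measure, so the set of representing measures is a nonempty weak-$\ast$ closed, convex subset of the probability measures on $K$, hence weak-$\ast$ compact by Banach--Alaoglu.

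Next I would introduce the convexity order: for representing measures $\mu,\nu$ write $\mu\succ\nu$ if $\int_K \varphi\,d\mu \ge \int_K \varphi\,d\nu$ for every continuous \emph{convex} function $\varphi$ on $K$. One checks, via Zorn's lemma and the compactness just noted, that every representing measure is dominated in this order by some maximal representing measure $\mu_x$; the key continuity input is that the upper envelope $\bar\varphi$ of a convex $\varphi$ behaves well, and that $\mu\succ\delta_x$ forces $\mu$ to be ``more spread toward the boundary.'' The crucial step is then to show that a maximal measure is carried by $\mathrm{Ext}(K)$. Here is where metrizability enters decisively: when $K$ is metrizable, $\mathrm{Ext}(K)$ is a $G_\delta$ subset of $K$ (it is the set where a suitable strictly convex continuous function $\varphi_0$ — obtained from a countable family of affine functions separating points, e.g.\ $\varphi_0=\sum 2^{-n} f_n^2$ with $(f_n)$ dense in the unit ball of $A(K)$ — coincides with its own upper envelope), hence in particular Borel, so ``$\mu_x$ supported on $\mathrm{Ext}(K)$'' is a meaningful Borel statement. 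One proves that for maximal $\mu_x$ one has $\int_K \varphi_0\,d\mu_x = \int_K \bar\varphi_0\,d\mu_x$, which by strict convexity forces $\mu_x\big(K\setminus\mathrm{Ext}(K)\big)=0$.

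The main obstacle, and the technical heart of the argument, is precisely this last point: establishing that maximality in the convexity order implies the measure annihilates $K\setminus\mathrm{Ext}(K)$, together with the lemma that $\mathrm{Ext}(K)$ is a $G_\delta$ in the metrizable case. In the non-metrizable setting $\mathrm{Ext}(K)$ need not even be Borel and the statement must be weakened (to measures that vanish on every Baire set disjoint from $\mathrm{Ext}(K)$), so the metrizability hypothesis is genuinely being used, not merely for convenience. Everything else — existence of \emph{some} representing measure ($\delta_x$ works), weak-$\ast$ compactness of the representing measures, the Zorn's-lemma extraction of a maximal one, and the final translation ``$\int f\,d\mu_x=f(x)$ for all $f\in A(K)$ extends to all $f\in V^\ast$'' since restrictions of continuous linear functionals on $V$ are affine and continuous on $K$ — is routine once the maximality-support lemma is in hand. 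I would therefore organise the write-up as: (1) setup and representing measures; (2) the convexity order and existence of maximal representing measures; (3) the $G_\delta$ lemma for $\mathrm{Ext}(K)$ under metrizability; (4) maximal $\Rightarrow$ supported on $\mathrm{Ext}(K)$; (5) conclude for all $x\in K$ and all $f\in V^\ast$. For the detailed proofs of steps (2)--(4) I would cite a standard reference such as Phelps's lectures on Choquet theory, since reproducing them in full is beyond the scope of this review.
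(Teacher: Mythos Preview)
Your sketch is the standard Choquet--Meyer argument and is correct as an outline; in particular, the role of metrizability (making $\mathrm{Ext}(K)$ a $G_\delta$ via a strictly convex continuous function built from a countable separating family in $A(K)$) and the Zorn/compactness extraction of a maximal representing measure in the Choquet order are exactly the right ingredients, and you correctly identify the maximality-implies-extremal-support lemma as the technical core.

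There is nothing to compare against, however: the paper does not prove Theorem~\ref{choquet}. It is stated as a classical result, with Phelps's \emph{Lectures on Choquet's Theorem} \cite{phelps} listed in the bibliography as the intended reference. Your proposal is precisely the argument one finds there, so you are not offering an alternative route but rather filling in what the paper deliberately omits. If you intend to include this in the review, it would be appropriate simply to cite \cite{phelps} for steps (2)--(4), as you already suggest, rather than reproducing them.
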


Krein-Milman's and Choquet's theorems are equivalent when ${\rm Ext}(K)$ is closed.


Choquet's theorem has a very interesting consequence in von Neumann
algebras. Let $\M\in \mathcal{L}(H)$ be a von Neumann algebra. Since a
von Neumann algebras is a dual space of some Banach space $\M_\ast$,
its closed unit ball is weak-$\ast$ compact. In addition, if $\H$ is
separable, the closed unit ball $B_1 \subset \mathcal{L}(\H)$ with the
weak-$\ast$ topology is metrizable. Hence, for $K=B_1$, we are in the
conditions of Choquet's theorem, so for any $A\in B_1$ there exists a
probability measure $\mu_A$ supported in ${\rm Ext}(B_1)$ such that,
for all continuous linear functionals on $\mathcal{L}(\H)$,
$f(A)=\int_{B_1} f d\mu_A$.

\begin{appendix}
\section{Proof of Theorem \ref{th:reducibility}}
\label{sec:appendic:ProofofTheoremth:reducibilit}

Here we follow \cite{murphy} closely.  First, take a non-null positive
linear functional $\omega'$ such that, $\forall A \in \A$,
$\omega'(A^*A) \leqslant \omega(A^*A)$. By the Cauchy-Schwartz inequality
\eqref{eq:cauchy_schwartz},
$$|\omega'(A^*B)|^2 \leqslant \omega'(A^*A)\omega'(B^*B) \leqslant \omega(A^*A)\omega(B^*B) = \|\pi(A)\Omega\|^2\|\pi(B)\Omega\|^2,$$
which implies, by Riesz representation theorem (see {\em e.g.}, \cite{reed_simon}),
the existence of a positive operator $T \in \L(\H)$ such that
$\<\pi(A)\Omega,T\pi(B)\Omega\> = \omega'(A^*B)$ (remark that the
sesquilinear form $(\pi(A)\Omega,\pi(B)\Omega) \longmapsto \omega'(A^*B)$
may be extended to the whole $\H$, and thus the domain of $T$,
only because $\Omega$ is cyclic). Taking $A,B,C \in \A$ arbitrary:
$$
\<\pi(A)\Omega,T\pi(B)\pi(C)\Omega\> = \omega'(A^*BC)
  = \omega'\big((B^*A)^*C\big) = \biglan \pi(A)\Omega,\pi(B)T\pi(C)\Omega\bigran,
$$
which implies that $[T,\, \pi(B)] = 0$ for any $B$.

As known in representation theory (as a consequence of Schur's Lemma,
see \emph{e.g.}  \cite{representation}), if $\pi$ is an irreducible
representation, any self-adjoint operator commuting with it must be a
multiple of the identity and vice-versa. It happens that, if
$T = \lambda \mathbbm{1}$ for some $\lambda \in \C$, then
$\omega' = \lambda \omega$. Let us show an implication of this fact, that
$\omega$ is a mixture if and only if $\pi$ is reducible, which is
enough for the theorem's statement.

Indeed, if $\omega$ is a mixture, one may find $\omega'$ such that
$\omega'(A^*A) \leqslant \omega(A^*A)$ which is not a multiple of
$\omega$: in this case, there is a scalar $\sigma \in (0,1)$ and
states $\omega_1$ and $\omega_2$ (none of them multiples of $\omega$) such
that $\omega = \sigma \omega_1 + (1-\sigma) \omega_2$, so just take
$\omega' = \sigma \omega_1$. The corresponding $T$ will not be a multiple
of the identity, hence the reducibility of $\pi$.

Conversely, supposing that $\pi$ is reducible, one may find a
self-adjoint $S \in \A$ commuting with every $\pi(B)$ and not being a
multiple of $\mathbbm{1}$. As a consequence, any non-trivial spectral
projector $P$ of $S$ will also commute with $\pi$, not be a multiple
of the identity, and further satisfy $0 < P < \mathbbm{1}$. Define the
functional $\omega'(A) = \<\Omega,\, P\pi(A)\Omega\>$ and remark that it is
positive, not a multiple of $\omega$, and that
$$
\omega(A^*A)-\omega'(A^*A) = \biglan \pi(A)\Omega,\, (\mathbbm{1}-P)\pi(A)\Omega\bigran \geqslant 0.
$$
This implies that $\omega$ is a non-trivial mixture
$\omega = \lambda \omega_1 + (1-\lambda)\omega_2$, with
$\lambda = \|\omega'\| \in (0,1)$ and states
$\omega_1 = \frac{1}{\|\omega'\|} \omega'$ and
$\omega_2 = \frac{1}{1-\|\omega'\|}(\omega - \omega')$ (see Remark
\ref{rem:ppc} for a quick justification that $\|\omega_2\| = 1$).

\end{appendix}
  

\thanks{{\bf Acknowledgements.} We are indebted to K.-H.\ Neeb for
  pointing us some incorrections in a previous version of this manuscript.}

\end{document}